\documentclass[fleqn,10pt]{IEEEtran}
\pdfoutput=1
\usepackage{graphicx}
\usepackage{amssymb}
\usepackage{amsmath}
\usepackage{amsthm}
\usepackage{epstopdf}
\usepackage{enumerate}
\usepackage{algorithm}
\usepackage[noend]{algpseudocode}

\DeclareMathOperator{\diag}{diag}

\DeclareMathOperator{\soft}{soft}
\DeclareMathOperator{\var}{var}

\newcommand{\thh}{\ensuremath{^\text{th}}}

\newcommand{\z}[1]{\mathcal{#1}}
\newcommand{\bo}[1]{\mathbf{#1}}

\theoremstyle{definition}
\newtheorem{defn}{Definition}
\newtheorem{assm}{Assumption}

\newtheorem{remark}{Remark}

\newtheorem{prop}{Proposition}

\newtheorem{experiment}{Experiment}

\title{Multichannel Linear Prediction for Blind Reverberant Audio Source Separation}
\author{\.Ilker Bayram and Sava\c{s}kan Bulek \vspace*{-0.5cm}
\\  \thanks{\.{I}. Bayram is with the Dept. of Electronics and Communications Eng., Istanbul Technical University, Istanbul, Turkey.  E-mail : ibayram@itu.edu.tr. 
S. Bulek is with Qualcomm Atheros, Inc., Auburn Hills, MI, USA. E-mail : sbulek@gmail.com.
}
}
\date{}

\begin{document}

\maketitle

\begin{abstract}
A class of methods based on multichannel linear prediction (MCLP) can achieve effective blind \emph{dereverberation} of a source, when the source is observed with a microphone array. We propose an inventive use of MCLP as a pre-processing step for blind \emph{source separation} with a microphone array. We show theoretically that, under certain assumptions, such  pre-processing reduces the original blind reverberant source separation problem to a non-reverberant one, which in turn can be effectively tackled using existing methods. We demonstrate our claims using real recordings obtained with an eight-microphone circular array in reverberant environments.
\end{abstract}
\begin{IEEEkeywords}
Microphone array, source separation, blind dereverberation, multichannel linear prediction, beamforming, post-filtering.
\end{IEEEkeywords}

\section{Introduction} \label{sec:intro}
In anechoic environments, audio source separation with a microphone array can be achieved by  beamforming and post-filtering \cite{GannotBeam}. On the other hand, in reverberant environments,  reverberation causes significant degradation in the quality and intelligibility of the beamformer output, especially if the room impulse responses (RIR) are not taken into account or are not known precisely. This is a serious limitation because in practice, RIRs are usually not known with precision and they can vary wildly with respect to the positions of the microphone/source pair.

A related problem addressing reverberation without explicit RIR knowledge is called blind dereverberation \cite{naylor}. 
A successful class of methods for blind dereverberation fall under the framework of multi-channel linear prediction (MCLP) \cite{nak10p717,juk16, juk15p509}. These methods require a  microphone array and employ linear prediction to extract information about the properties of reverberation.
In this paper, we propose to employ MCLP, which is primarily used for single source dereverberation, as a pre-processing step in source separation. We show theoretically and demonstrate numerically that such a scheme not only dereverberates the sources but also helps in the subsequent separation.

In order to make this statement more precise, let us briefly discuss an MCLP formulation. Consider an array of microphones, recording a source in a reverberant room. Recent MCLP formulations \cite{nak10p717,juk16,juk15p509} select one of the microphones as the reference and use delayed versions of the whole set of observations to linearly predict the reference signal. The residual of prediction, which may also be interpreted as the `innovations' component, forms the dereverberated estimate of the source. In our scenario, we have multiple sources recorded with a microphone array and we would like to separate and dereverberate the sources. For this, suppose we apply MCLP for each possible choice of the reference microphone.  If we have $N$  microphones in the array, this results in $N$ hopefully dereverberant but mixed observations. We say `hopefully dereverberant' because the use of MCLP has previously been considered and justified solely for the case of a single source \cite{juk16}. Further, even if the MCLP outputs thus obtained are dereverberated, in order to achieve separation, we have to know the relation between the recordings. In this paper, we show that, under a mild assumption on the RIRs, MCLP actually suppresses reflections other than the direct source and preserves the phase of the direct source. Therefore, the MCLP preprocessing step essentially reduces the reverberant separation problem to an anechoic one.
This greatly simplifies the separation step because we can now use the known array geometry to estimate the direction of arrival and perform separation using this estimate. 
We argue that this effect is not dependent on the following DOA estimation and separation algorithms. In that sense, the proposed MCLP stage complements currently available source separation algorithms, improving their performance. Consequently, we do not have any reservations about which method to use following the MCLP stage.  Nevertheless, for the sake of completeness, we include a brief description of the geometric source separation (GSS) algorithm \cite{par02p352} followed by a post-filter.

\subsection*{Related Work}
Literature on dereverberation/source separation with a microphone array is vast. Here, we focus on blind \footnote{We use the term `blind' to refer to methods that do not require to know the RIRs. However, the blind methods we consider may use the known array geometry, which is independent of the characteristics of the environment.} methods for dereverberation and source separation for arrays with a known geometry.

For source separation, provided the directions of arrival (DOA) of the sources are known, and reverberation is limited, beamformers like the minimum variance distortionless response (MVDR) beamformer, along with a post-filter \cite{bitzer_chp,GannotBeam,simmer_chp}  can be effective. In MVDR, one aims to minimize the energy of the reconstructed signals, while preserving the signals from the DOAs. However, reverberation causes a source to contribute to different directions, reducing the effectiveness of such approaches. Geometric source separation (GSS)  \cite{par02p352} addresses this problem by also enforcing  statistical independence of the reconstructed sources.
This is achieved by selecting the beamformer weights so as to minimize a cost function that consists of the sum of a term imposing geometric constraints based solely on the array geometry and a term that promotes the statistical independence of the sources (motivated by blind source separation techniques \cite{ICA}). In practice, GSS outputs are still reverberant for moderate to highly reverberant rooms. GSS has been complemented with a post-filter in \cite{val04ICASSP,valin04IROS} to devise a high performance source-separation method that can operate in real time.
An alternative approach, based on the assumption that the activity of the different sources do not overlap in the time-frequency domain \cite{yil04p830}, has been proposed in   \cite{sou13p913}. However, the model in \cite{sou13p913} assumes that the room impulse responses are relatively short, compared to the analysis window size used in the STFT.

Despite its simplicity, blind dereverberation is a challenging problem due mainly to the characteristics of room impulse responses. For a general overview and discussion of relatively recent methods, we refer to the edited book \cite{naylor}. Here, we focus on a successful class of methods that can be collected under the heading `multichannel linear prediction' (MCLP), as they are more directly related to the content of our work. These methods usually fit in a schema as follows. Given multiple observations $y_i$, for $i=1,2,\ldots,M$, one of the observations is chosen as the reference. Then, this reference observation is linearly predicted by the delayed version of all of the observations. Once the linear prediction coefficients are determined, this information has been used differently in prior work. For instance, in \cite{kin09p534}, the authors use the coefficients to estimate the power of late reverberations, which are subsequently reduced by spectral subtraction. In more recent work \cite{nak10p717,juk16,juk15p509}, the residual of linear prediction is taken as the dereverbed audio signal. Prior work also differentiates based on the criteria used for determining the weights used in linear estimation. In \cite{kin09p534}, the authors propose to select the linear estimation weights so as to minimize the energy of the residual. In \cite{nak10p717}, the source signal is modeled as a time-varying Gaussian process and the weights are selected so as to maximize the likelihood of such a model. More recently, \cite{juk16, juk15p509} employ a sparsity- based model in the time-frequency domain and select the weights so that the residual minimizes a sparsity promoting cost function.

An interesting approach that employs MCLP  to achieve suppression of late reverberation and source separation is presented in \cite{tog13p369}. The authors use MCLP to suppress late reverberation, using a statistical framework similar to that in \cite{nak10p717}. In contrast to \cite{nak10p717}, a time-varying scenario with multiple sources is considered and the MCLP weights are determined along with the statistics of the late reverberant part of the sources in an iterative manner. 

\subsection*{Contribution}

Our proposal in this paper is of a complementary nature and is likely to improve the performance of existing separation algorithms. We present an inventive way to employ MCLP as a pre-processing step for blind source separation using a microphone array. We show theoretically that, under certain assumptions, such a pre-processing step converts the reverberant source separation problem into a non-reverberant one, which in turn can be solved by existing methods such as \cite{sou13p913,par02p352,valin04IROS} outlined above. Such an approach is inherently different than previous uses of MCLP which aim to dereverb a single source as in \cite{kin09p534,nak10p717,juk16,juk15p509}. Our proposal also differs from that of \cite{tog13p369} because \cite{tog13p369} requires the source statistics to cancel the late reverberation and thus resorts to a joint estimation of the linear prediction weights and source statistics. This approach in \cite{tog13p369} requires an iterative procedure and is handled with an EM algorithm. In contrast, we propose to employ the MCLP step once, in order to reduce the reverberant problem to an approximately non-reverberant one. 

In addition to  theoretical justification, we demonstrate the validity of the claims using real recordings. For the sake of completeness, we include a description of GSS along with a simple post-filter, to be applied following the MCLP step, in order to achieve separation. However, after MCLP, other methods that assume short RIRs, or operate under a free propagation model may also be used, as noted above.

\subsection*{Notation}
Sequences and continuous functions of time are denoted with a small letter as in $h(n)$ or $h(t)$, respectively. The DFT and STFT of sequences are denoted with a capital letter as $H(\omega)$ and $H(s,\omega)$ respectively -- which one is referred to will be clear from the context. The $z$-transform of sequences are denoted with calligraphic letters as in $\mathcal{H}(z)$. We refer to $\mathcal{H}(z)$ as `causal' if it is a polynomial of $z^{-1}$. For a causal $\mathcal{H}(z)$, we denote the constant $h(0)$ also as $\mathcal{H}(\infty)$.

Throughout the paper, $K$ denotes the number of sources and $N$  denotes the number of microphones in the array.
\section{Description of the Model}
Throughout the paper, we are interested in far-field processing. Thus, the distance of the source to the microphones is assumed to be significantly greater than the array size. Our proposed formulation/algorithm is in the STFT domain. However, for simplicity of exposition, we first introduce a model in the continuous-time domain and then present an approximation of this model in the STFT domain.

\subsection{A Time-Domain Observation Model}\label{sec:tdmodel}
We introduce the time-domain model in three stages, where complexity increases with each stage. We start with the case of a single source, single microphone. In the second stage, we consider the single source, multiple microphone case. Finally, we discuss the multiple source, multiple microphone case, which is the main scenario of interest in this paper.
\subsubsection{Single Source, Single Microphone}
For a given microphone-source pair, let $x(t)$ denote the signal produced by the source. The observation $y(t)$ can be approximated as,
\begin{equation}\label{eqn:RIR}
y(t) \approx x(t) \ast \tilde{h}(t),
\end{equation}
where $\tilde{h}(t)$ is the room impulse response (RIR) for the given source-microphone positions. Now suppose that the distance from the source to the microphone is $D$ meters and the speed of sound in the environment is $c$ m/sec. Then, the acoustic signal emitted by the source reaches the microphone after a delay of $D/c$ seconds. We declare the \emph{distance compensated room impulse response} (DCIR) to be 
\begin{equation}
h(t) = \tilde{h}\bigl(t + D/c \bigr).
\end{equation}
Notice that for practical purposes, we can work with the DCIR instead of the RIR for a single microphone-source setup. For multiple microphones or sources, we need to take into account relative distances, as discussed next. 

\subsubsection{Single Source, Multiple Microphones}
Consider   a setup consisting of a microphone array and a source as shown in Fig.~\ref{fig:setup}. Suppose we set the center of the array as the origin for a coordinate system. For this origin, let the vector $p_i$ denote the position of the the $i\thh$ microphone. Also, let $u_x$ denote the unit vector in the direction of the source. Finally, let $d_i = \langle p_i, u_x\rangle$. Note that $d_i$ denotes the coordinate of the projection of the $i\thh$ microphone's position to the subspace spanned by $u_x$.  The induced time-delay is $\tau_i = - d_i / c$. Notice that $\tau_i$ may be positive or negative, i.e., may represent an advance or delay. Finally, let the DCIR from the source to the $i\thh$ microphone be denoted as $h_i(t)$. Then, we model the observation at the $i\thh$ microphone, namely $y_i$, as,
\begin{equation}
y_i(t) \approx x(t) \ast h_i(t - \tau_i) = x(t - \tau_i) \ast h_i(t).
\end{equation}
The far field assumption amounts to the following : If the distance of the source to the center of the array is $D$, then the distance of the source to the $i\thh$ microphone is approximately $D - d_i$. Under the far-field assumption, the described model preserves the ability of the array to differentiate different directions. We next add another layer to this model by considering multiple sources.

\subsubsection{Multiple Sources, Multiple Microphones}

Suppose now that there are $K$ sources and $N$ microphones. Let the $k\thh$ source signal be denoted by $x_k(t)$. Also, let the unit vector in the direction of the $k\thh$ source be denoted by  $u_k$. For the $i\thh$ microphone, positioned at $p_i$, we set $\tau_{i,k} = \langle p_i, u_k \rangle / c$. Finally, let $h_{i,k}(t)$ denote the DCIR between the $k\thh$ source and the $i\thh$ microphone. Then, we model the observation at the $i\thh$ microphone as,
\begin{equation}\label{eqn:tdmodel}
y_i(t) \approx \sum_{k = 1}^K x_k(t) \ast h_{i,k}(t - \tau_{i,k}) = \sum_{k = 1}^K x_k(t - \tau_{i,k}) \ast h_{i,k}(t)
\end{equation}
Notice that, the absolute delay from the $k\thh$ source is not included in the model. However, this is not a careless omission, because this delay is in fact not known and is not estimated in practice. Therefore, the described model is suitable for practical far-field processing scenarios.

\begin{figure}
\centering
\includegraphics[scale=1]{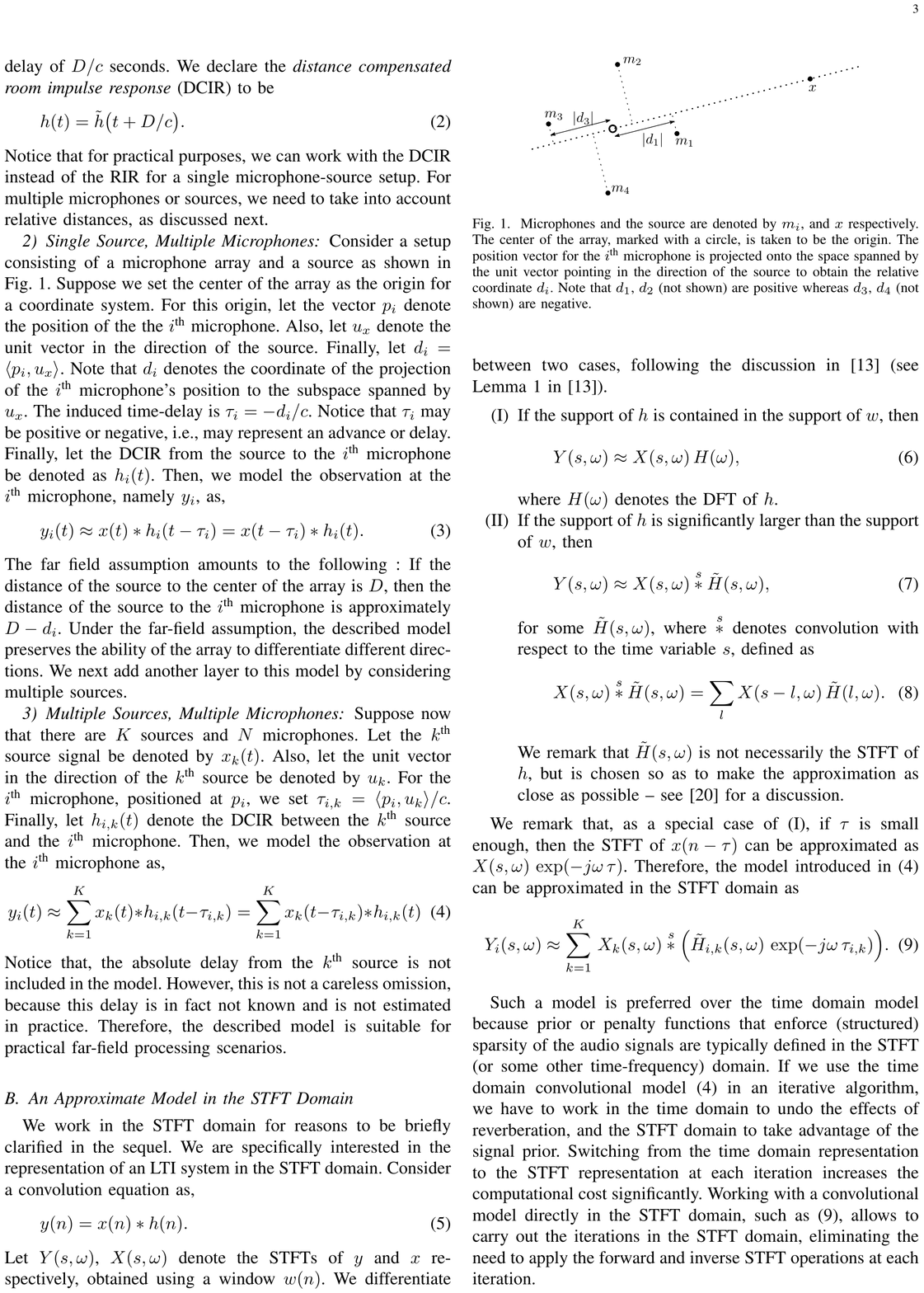}
\caption{Microphones and the source are denoted by $m_i$, and $x$ respectively. The center of the array, marked with a circle, is taken to be the origin. The position vector for the $i^{\text{th}}$ microphone is projected onto the space spanned by the unit vector pointing in the direction of the source to obtain the relative coordinate $d_i$. Note that $d_1$, $d_2$ (not shown) are positive whereas $d_3$, $d_4$ (not shown) are negative. \label{fig:setup}}
\end{figure}

\subsection{An Approximate Model in the STFT Domain}
We work in the STFT domain for reasons to be briefly clarified in the sequel. We are specifically interested in the representation of an LTI system in the STFT domain. Consider a convolution equation as,
\begin{equation}
y(n) = x(n) \ast h(n).
\end{equation}
Let $Y(s,\omega)$, $X(s,\omega)$ denote the STFTs of $y$ and $x$ respectively, obtained using a window $w(n)$. We differentiate between two cases, following the discussion in \cite{kow10p818} (see Lemma~1 in \cite{kow10p818}).
\begin{enumerate}[(I)]
\item\label{item:contained} If the support of $h$ is contained in the support of $w$, then
\begin{equation}
Y(s,\omega) \approx X(s,\omega)\,H(\omega),
\end{equation}
where $H(\omega)$ denotes the DFT of $h$.
\item If the support of $h$ is significantly larger than the support of $w$, then
\begin{equation}
Y(s,\omega) \approx X(s,\omega) \overset{s}{\ast} \tilde{H}(s,\omega),
\end{equation}
for some $\tilde{H}(s,\omega)$, where $\overset{s}{\ast}$ denotes convolution with respect to the time variable $s$, defined as
\begin{equation}
X(s,\omega) \overset{s}{\ast} \tilde{H}(s,\omega) = \sum_l X(s - l,\omega)\,\tilde{H}(l,\omega).
\end{equation}
We remark that $\tilde{H}(s,\omega)$ is not necessarily the STFT of $h$, but is chosen so as to make the approximation as close as possible -- see \cite{rei02p730} for a discussion.
\end{enumerate}

We remark that, as a special case of \eqref{item:contained}, if $\tau$ is small enough, then the STFT of $x(n - \tau)$ can be approximated as $X(s,\omega)\,\exp(-j\omega\,\tau)$. Therefore, the model introduced in \eqref{eqn:tdmodel} can be approximated in the STFT domain as
\begin{equation}\label{eqn:STFTdomain}
Y_i(s,\omega) \approx \sum_{k=1}^K \,  X_k(s,\omega)\,\overset{s}{\ast}\, \Bigl( \tilde{H}_{i,k}(s,\omega)\,\exp(-j\omega\,\tau_{i,k}) \Bigr).
\end{equation}

Such a model is preferred over the time domain model because prior or penalty functions that enforce (structured) sparsity of the audio signals are typically defined in the STFT (or some other time-frequency) domain. If we use the time domain convolutional model  \eqref{eqn:tdmodel} in an iterative algorithm, we have to work in the time domain to undo the effects of reverberation, and the STFT domain to take advantage of the signal prior. Switching from the time domain representation to the STFT representation at each iteration increases the computational cost significantly. Working with a convolutional model directly in the STFT domain, such as \eqref{eqn:STFTdomain}, allows to carry out the iterations in the STFT domain, eliminating the need to apply the forward and inverse STFT operations at each iteration.

\section{Multi-Channel Linear Prediction for Dereverberation}
In this section, we discuss the implications of using multi-channel linear prediction (MCLP) as a preliminary step for the source separation problem. We derive conditions under which MCLP transforms the original blind separation/dereverberation problem to a non-blind one. 

We start the discussion with a single source and then consider extensions to multiple sources.

\subsection{MCLP for a Single Source}\label{sec:MCLPss}
Consider a signal $x$ that is observed via different channels as
\begin{equation}
y_i = x \ast h_i, \text{ for } i=1,2,\ldots, N,
\end{equation}
where $h_i$ denotes the impulse response of the $i\thh$ channel. 
In the $z$-domain, this can be written as,
\begin{equation}
\begin{bmatrix}
\z{Y}_1(z) \\ \vdots \\ \z{Y}_N(z)  
\end{bmatrix}
=
\begin{bmatrix}
\z{H}_1(z) \\ \vdots \\ \z{H}_N(z)  
\end{bmatrix}
\z{X}(z).
\end{equation}
We would like to find causal filters $\z{G}_1(z)$, \ldots, $\z{G}_N(z)$ such that 
\begin{equation}\label{eqn:MCLPdesired}
\begin{bmatrix}
\z{G}_1(z) & \ldots & \z{G}_N(z)  
\end{bmatrix}\,
\begin{bmatrix}
\z{H}_1(z) \\ \vdots \\ \z{H}_N(z)  
\end{bmatrix}
 = 1.
\end{equation}
MINT \cite{miy88p145} ensures that such filters exist provided that $\mathcal{H}_i$'s are co-prime, i.e., they do not have a common non-constant factor.
Even though it is relieving to know that such filters exist, finding the filters is still a challenging problem when $h_i$'s are not known. MCLP addresses this challenge.

In MCLP, the idea is to estimate $x$ by subtracting from $y_{r}$ (for a chosen reference index $r$) delayed weighted combinations of $y_i$'s.
This is equivalent to using in \eqref{eqn:MCLPdesired} a set of filters of the form
\begin{align}\label{eqn:MCLPfilters}
\begin{split}
\z{G}_r(z) &= 1 - z^{-d}\,\z{U}_r(z) \text{ for a reference index }r, \\
\z{G}_i(z) &= -z^{-d}\,\z{U}_i(z) \text{ for }i=1,\ldots,N,\,i\neq r, 
\end{split}
\end{align}
for a positive integer $d$ (determining the amount of delay) and causal $\z{U}_k$'s. 
In \cite{juk16}, the selection of the filters is guided by the assumption that the original $x$ is sparse. More specifically, suppose $y_i(s)$ denotes the 1D time-series derived from a frequency band of the observations, that is $y_i(s) = Y_i(s,\omega)$ for some $\omega$. The filters $u_i$ are found by solving the following minimization problem. 
\begin{equation}
\min_{u_1,\ldots,u_N} P\left(y_r(s) - \sum_{i=1}^N \sum_{l = 0}^L y_i(s-d - l)\, u_i(l) \right),
\end{equation}
where $P(\cdot)$ is a sparsity promoting function. We will have more to say about $P$, but first we would like to stress a point about the choice \eqref{eqn:MCLPfilters}.
MINT ensures the existence of dereverberating filters but does not guarantee that they can be of the form \eqref{eqn:MCLPfilters}. Below, we propose modified conditions which ensure the existence of such $G_i$'s. Another issue is the role of the reference index $r$. It turns out that repeating the linear prediction for different $r$ values leads to an interesting outcome, summarized in Prop.~\ref{prop:Observations} below. In order to derive Prop.~\ref{prop:Observations}, we need an auxiliary result (Prop.~\ref{prop:modMINT}) and an assumption.
\begin{prop}\label{prop:modMINT}
Suppose that the $N-1$ polynomials  $\z{H}_i$, $i \in \{ 1,\ldots, N \} \setminus \{r\}$, are co-prime. Then, for $d = 1$, there exist a set of filters of the form \eqref{eqn:MCLPfilters} such that 
\begin{equation}\label{eqn:propMCLP}
\begin{bmatrix}
\z{G}_1(z) & \ldots & \z{G}_N(z)  
\end{bmatrix}\,
\begin{bmatrix}
\z{H}_1(z) \\ \vdots \\ \z{H}_N(z)  
\end{bmatrix}
 = c,
\end{equation}
for a constant $c$.
Further, for a set of causal filters of the form \eqref{eqn:MCLPfilters}, if \eqref{eqn:propMCLP} holds, then $c  = \z{H}_r(\infty)$.
\begin{proof}
See Appendix~\ref{app:modMINT}.
\end{proof}
\end{prop}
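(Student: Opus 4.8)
The plan is to substitute the constrained filter structure \eqref{eqn:MCLPfilters} with $d=1$ directly into the left-hand side of \eqref{eqn:propMCLP}. After expanding and rearranging, the requirement \eqref{eqn:propMCLP} becomes the single identity
\begin{equation}\label{eqn:plan-reduction}
\z{H}_r(z) - c = z^{-1}\Bigl( \z{U}_r(z)\,\z{H}_r(z) + \sum_{i \neq r} \z{U}_i(z)\,\z{H}_i(z) \Bigr),
\end{equation}
and both assertions will be read off from it. Every $\z{U}_k$ and $\z{H}_k$ here is causal, i.e.\ a polynomial in $z^{-1}$, a fact I use repeatedly.

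For the ``further'' claim, note that the right-hand side of \eqref{eqn:plan-reduction} is $z^{-1}$ times a causal filter, hence itself a causal filter with vanishing constant term. Matching constant terms on the two sides of \eqref{eqn:plan-reduction} then forces $h_r(0) - c = 0$, i.e.\ $c = h_r(0) = \z{H}_r(\infty)$.

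For existence I would run the reduction in reverse. Set $c = \z{H}_r(\infty)$; then $\z{H}_r(z) - c$ is a polynomial in $z^{-1}$ with zero constant term, so $\z{H}_r(z) - c = z^{-1}\,\z{Q}(z)$ for some causal $\z{Q}(z)$ (it is precisely here that $d = 1$ matters: the single free parameter $c$ can only kill the constant term of $\z{H}_r$). It now suffices to find causal $\z{U}_k$ with $\z{U}_r\,\z{H}_r + \sum_{i\neq r}\z{U}_i\,\z{H}_i = \z{Q}$. This is where the hypothesis is used: since the $N-1$ polynomials $\{\z{H}_i : i \neq r\}$ are co-prime, MINT \cite{miy88p145} applied to these $N-1$ channels provides causal $\z{V}_i$ with $\sum_{i\neq r}\z{V}_i\,\z{H}_i = 1$. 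Taking $\z{U}_r \equiv 0$ and $\z{U}_i = \z{Q}\,\z{V}_i$ (products of causal filters, hence causal) gives $\sum_{i\neq r}\z{U}_i\,\z{H}_i = \z{Q}$, so \eqref{eqn:plan-reduction}, and therefore \eqref{eqn:propMCLP}, holds with $c = \z{H}_r(\infty)$.

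Once \eqref{eqn:plan-reduction} is in place the rest is bookkeeping; the single substantive ingredient --- and hence the main obstacle --- is the passage from ``co-prime'' to an explicit B\'ezout combination with genuinely causal coefficients (polynomials in $z^{-1}$), i.e.\ the correct invocation of MINT over $\R[z^{-1}]$. Granting that, dividing out $z^{-1}$ and scaling the B\'ezout witnesses by $\z{Q}$ completes the argument.
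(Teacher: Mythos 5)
Your proposal is correct and follows essentially the same route as the paper's proof in Appendix~A: the converse is the same constant-term (evaluation at $z=\infty$) argument, and the existence construction—setting $c=\z{H}_r(\infty)$, writing $\z{H}_r(z)-c = z^{-1}\z{Q}(z)$, and scaling a causal B\'ezout combination $\sum_{i\neq r}\z{V}_i\,\z{H}_i=1$ by $\z{Q}$ with $\z{U}_r=0$—is exactly the paper's choice $\z{U}_k=\tilde{\z{U}}_k\,(\z{H}_r(z)-\z{H}_r(\infty))/z^{-1}$. No gaps.
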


Consider now the application of this result to the model \eqref{eqn:STFTdomain}. Assume that there is a single source, i.e., $K=1$. For a fixed frequency  $\omega$, let 
\begin{equation}
h_i(s) = \tilde{H}_{i,1}(s,\omega)\,\exp(-j\omega\,\tau_{i,1}). 
\end{equation}
Under the hypotheses of Prop.~\ref{prop:modMINT}, if, for $d=1$, MCLP can recover a set of filters satisfying \eqref{eqn:propMCLP}, then Prop.~\ref{prop:modMINT} implies that,
\begin{equation}\label{eqn:c}
\hat{y}_r(s) = h_r(0)\,\exp(-j\omega\,\tau_{r,1})\,x(s).
\end{equation}
Now, if further,
\begin{equation}\label{eqn:assumption}
h_i(0) = h_{i'}(0) \text{ for any distinct pair }(i,i'),
\end{equation}
%by employing MCLP with different reference indices, we obtain
then,
\begin{equation}\label{eqn:cprime}
\hat{y}_r(s) = c' \,\exp(-j\omega\,\tau_{r,1})\, x(s),
\end{equation}
for some constant $c'$, independent of $r$ (actually, $c' = h_i(0)$, for any $i$). We remark that the difference of \eqref{eqn:c} and \eqref{eqn:cprime} is that the rhs of \eqref{eqn:cprime} depends only on the direction of arrival, whereas this is not the case in \eqref{eqn:c}.
In other words, MCLP eliminates reflections from different directions. 

The assumption in \eqref{eqn:assumption} states that, around the origin, distance compensated impulse responses are similar. This assumption is key to the development in this paper and we state it explicitly for later reference. We will  verify this assumption experimentally in Section~\ref{sec:exp}.
\begin{assm}\label{assm:main}
For a given source - microphone array pair, let $\tilde{H}_i(s,\omega)$ denote the STFT representation of the distance compensated impulse response for the $i\thh$ microphone. Then,
\begin{equation}
\tilde{H}_i(0,\omega) = \tilde{H}_{i'}(0,\omega),
\end{equation}
for $i \neq i'$ and all $\omega$.
\end{assm}

We state the foregoing observations as a proposition.
\begin{prop}\label{prop:Observations}
For a fixed frequency  $\omega$, let $h_i(s) = \tilde{H}_{i,1}(s,\omega)\,\exp(-j\omega\,\tau_{i,1})$. Assume also that the collection of polynomials $\z{H}_i$, $i \in \{ 1,\ldots, N \} \setminus \{r\}$ are co-prime. Then, for $d = 1$, there exist a set of filters of the form \eqref{eqn:MCLPfilters} such that 
\begin{equation}\label{eqn:propMCLP1}
\sum_i g_i(n) \ast h_i(n) = c_r\, \delta(n)
\end{equation}
for a constant $c_r$.

Further, for a set of filters of the form \eqref{eqn:MCLPfilters}, if \eqref{eqn:propMCLP1} holds, and Assumption~\ref{assm:main} is in effect, then $c_r  = c'\,\exp(-j\omega\tau_{r,1})$ for a constant $c'$, independent of $r$. \qed
\end{prop}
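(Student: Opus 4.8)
The plan is to obtain both assertions as essentially immediate consequences of Proposition~\ref{prop:modMINT}, after translating between the $z$-domain formulation used there and the time-domain (delta-function) formulation stated here.

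For the existence part, I would first observe that, for each fixed $\omega$, the sequence $h_i(s) = \tilde{H}_{i,1}(s,\omega)\,\exp(-j\omega\,\tau_{i,1})$ is a causal finite-length sequence in $s$, so its $z$-transform $\mathcal{H}_i(z)$ is a polynomial in $z^{-1}$; by hypothesis the polynomials $\mathcal{H}_i$ with $i \neq r$ are co-prime. Applying Proposition~\ref{prop:modMINT} with $d = 1$ then produces causal filters $\mathcal{G}_1,\ldots,\mathcal{G}_N$ of the form \eqref{eqn:MCLPfilters} with $\sum_i \mathcal{G}_i(z)\,\mathcal{H}_i(z) = c$ for a constant $c$. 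Reading this identity back in the time domain — convolution corresponds to multiplication of $z$-transforms, and the $z$-transform of a constant multiple of $\delta$ is that constant — gives $\sum_i g_i(n) \ast h_i(n) = c\,\delta(n)$, so the first claim holds with $c_r = c$.

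For the characterization part, I would invoke the second assertion of Proposition~\ref{prop:modMINT}: for any set of causal filters of the form \eqref{eqn:MCLPfilters} for which \eqref{eqn:propMCLP1} (equivalently \eqref{eqn:propMCLP}) holds, the constant must equal $\mathcal{H}_r(\infty) = h_r(0)$. By the definition of $h_r$ we have $h_r(0) = \tilde{H}_{r,1}(0,\omega)\,\exp(-j\omega\,\tau_{r,1})$. Assumption~\ref{assm:main} says that $\tilde{H}_{i,1}(0,\omega)$ is independent of the microphone index, so setting $c' = \tilde{H}_{i,1}(0,\omega)$ (for any $i$, hence independent of $r$) yields $c_r = c'\,\exp(-j\omega\,\tau_{r,1})$, as required.

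I do not anticipate any genuine obstacle: the content is already contained in Proposition~\ref{prop:modMINT} together with the discussion preceding \eqref{eqn:cprime}, and this proposition merely packages it. The only points needing a moment's care are the bookkeeping in passing from the $z$-domain identity to the stated time-domain identity, and the remark that multiplying each $\tilde{H}_{i,1}(\cdot,\omega)$ by the nonzero scalar $\exp(-j\omega\,\tau_{i,1})$ does not change its roots, so the co-primality hypothesis on the $\mathcal{H}_i$ is the same condition one would naturally impose on the $\tilde{H}_{i,1}(\cdot,\omega)$ directly.
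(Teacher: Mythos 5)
Your proposal is correct and follows essentially the same route as the paper: the paper gives no separate proof for this proposition but derives it in the discussion preceding it (equations \eqref{eqn:c}--\eqref{eqn:cprime}), exactly by applying Prop.~\ref{prop:modMINT} to the sequences $h_i$ and then invoking Assumption~\ref{assm:main} to make the constant $\tilde{H}_{r,1}(0,\omega)$ independent of $r$. Your added remarks on the $z$-domain/time-domain translation and on the phase factor not affecting co-primality are accurate but not needed beyond what the paper already assumes.
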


Next, we  extend the discussion to the case of multiple sources.

\subsection{MCLP for Multiple Sources}\label{sec:MultiSourceMCLP}

Consider an observation model with $K$ sources $x_1$,\ldots $x_K$ and $N>K$ observations defined as,
\begin{equation}\label{eqn:observation}
y_i = \sum_{k=1}^K x_k \ast h_{i,k}, \text{ for } i=1,2,\ldots, N.
\end{equation}
In the $z$-domain, this can be written as,
\begin{equation}\label{eqn:modelMC}
\begin{bmatrix}
\z{Y}_1(z) \\ \vdots \\ \z{Y}_N(z)  
\end{bmatrix}
=
\underbrace{ \begin{bmatrix}
\z{H}_{1,1}(z) & \ldots &\z{H}_{1,K}(z) \\ & \ddots&  \\ \z{H}_{N,1}(z)  & \ldots & \z{H}_{N,K}(z)
\end{bmatrix}}_{\bo{H}}
\begin{bmatrix}
\z{X}_1(z) \\
\ldots \\
\z{X}_K(z) 
\end{bmatrix}
\end{equation}
For this model, MINT also ensures that we can find filters $\z{G}_{k,i}$, for $k=1,\ldots,K$, $i=1,\ldots, N$ such that
\begin{equation}\label{eqn:multisource}
\begin{bmatrix}
\z{G}_{1,1}(z) & \ldots & \z{G}_{1,N}(z) \\ & \ddots&  \\ \z{G}_{K,1}(z)  & \ldots & \z{G}_{K,N}(z)
\end{bmatrix}
\,\begin{bmatrix}
\z{H}_{1,1}(z) & \ldots &\z{H}_{1,K}(z) \\ & \ddots&  \\ \z{H}_{N,1}(z)  & \ldots & \z{H}_{N,K}(z)
\end{bmatrix}
= I,
\end{equation}
provided that the (Smith) canonical form of $\bo{H}$ (see Appendix~\ref{app:Smith} or \cite{Gantmacher1}, Sec.VI.2, Defn.3) has a non-zero constant (i.e., no polynomial higher than zero order) diagonal. However, unlike the single source case, we cannot use filters of the form \eqref{eqn:MCLPfilters} to achieve \eqref{eqn:multisource}. We have the following extension in this case.
\begin{prop}\label{prop:MCLP2}
For a fixed reference microphone index $r \in \{1,\ldots,N\}$, let $\bar{\bo{H}}_r$ be the matrix obtained by removing the $r\thh$ row of $\bo{H}$. If the greatest common divisor of all the $K$-minors of $\bar{\bo{H}}_r$ is a non-zero constant, then, for $d = 1$, there exist a set of filters of the form \eqref{eqn:MCLPfilters} such that 
\begin{equation}\label{eqn:propMCLP2}
\begin{bmatrix}
\z{G}_1(z) & \ldots & \z{G}_N(z)  
\end{bmatrix}\,
\begin{bmatrix}
\z{H}_{1,1}(z) & \ldots & \z{H}_{1,K}(z) \\ & \ddots&  \\ \z{H}_{N,1}(z)  & \ldots & \z{H}_{N,K}(z)
\end{bmatrix}
 = c^{(r)},
\end{equation}
for a constant vector $c^{(r)}$ (i.e., containing no polynomial of order greater than zero).

Further, for a set of filters of the form \eqref{eqn:MCLPfilters}, if \eqref{eqn:propMCLP2} holds, then the components of $c$ satisfy,
\begin{equation}\label{eqn:propcr}
c^{(r)}_k = \mathcal{H}_{r,k}(\infty), \text{ for }k=1,2,\ldots, K.
\end{equation}
\begin{proof}
See Appendix~\ref{app:MCLP2}.
\end{proof}
\end{prop}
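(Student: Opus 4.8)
The plan is to follow the same template as the proof of Proposition~\ref{prop:modMINT}, with the co-primality hypothesis on a vector of polynomials replaced by a Smith-form (determinantal divisor) condition on the $(N-1)\times K$ polynomial matrix $\bar{\mathbf{H}}_r$. Substituting the parametrization \eqref{eqn:MCLPfilters} with $d=1$ into the $k$-th component of the left-hand side of \eqref{eqn:propMCLP2} gives
\[
\sum_{i=1}^{N}\mathcal{G}_i(z)\,\mathcal{H}_{i,k}(z)=\mathcal{H}_{r,k}(z)-z^{-1}\sum_{i=1}^{N}\mathcal{U}_i(z)\,\mathcal{H}_{i,k}(z).
\]
The second term on the right is a causal polynomial with vanishing constant term, so if the expression is to equal a constant $c^{(r)}_k$, that constant must be the constant term $\mathcal{H}_{r,k}(\infty)$ of $\mathcal{H}_{r,k}$. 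This already disposes of the ``furthermore'' statement: granting existence, evaluating the displayed identity at $z=\infty$ kills the $z^{-1}(\cdot)$ term and leaves \eqref{eqn:propcr}.

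It remains to construct causal filters of the prescribed form realizing \eqref{eqn:propMCLP2}. I would take $\mathcal{U}_r\equiv0$ (so that $\mathcal{G}_r=1$, still of the form \eqref{eqn:MCLPfilters}) and write $\mathcal{H}_{r,k}(z)-\mathcal{H}_{r,k}(\infty)=z^{-1}\mathcal{R}_k(z)$ with $\mathcal{R}_k(z):=z\bigl(\mathcal{H}_{r,k}(z)-\mathcal{H}_{r,k}(\infty)\bigr)$ causal. Then \eqref{eqn:propMCLP2} reduces to finding a causal row vector $\mathbf{v}=[\mathcal{U}_i]_{i\neq r}$ of length $N-1$ with
\[
\mathbf{v}\,\bar{\mathbf{H}}_r=\mathbf{R},\qquad \mathbf{R}:=[\mathcal{R}_1,\ldots,\mathcal{R}_K].
\]
Working over the principal ideal domain $\mathbb{C}[z^{-1}]$ of polynomials in $z^{-1}$, I would use the Smith canonical form (Appendix~\ref{app:Smith}; \cite{Gantmacher1}) $\bar{\mathbf{H}}_r=\mathbf{U}\,\mathbf{D}\,\mathbf{V}$ with $\mathbf{U},\mathbf{V}$ unimodular (hence with inverses again in $\mathbb{C}[z^{-1}]$) and $\mathbf{D}$ the $(N-1)\times K$ matrix of invariant factors $d_1\mid d_2\mid\cdots\mid d_K$. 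The greatest common divisor of all $K$-minors of $\bar{\mathbf{H}}_r$ (these exist because $N>K$) equals $d_1d_2\cdots d_K$; since every $d_j$ divides this product, the hypothesis that the product is a nonzero constant forces each $d_j$ to be a unit. Thus the top $K\times K$ block of $\mathbf{D}$ is a diagonal matrix $\Lambda$ of nonzero constants and its remaining rows vanish. Writing $\widetilde{\mathbf{R}}:=\mathbf{R}\mathbf{V}^{-1}$ (causal, of length $K$) and letting $\mathbf{w}$ be the row vector of length $N-1$ whose first $K$ entries form $\widetilde{\mathbf{R}}\,\Lambda^{-1}$ and whose remaining entries are zero, one has $\mathbf{w}\mathbf{D}=\widetilde{\mathbf{R}}$, hence $\mathbf{v}:=\mathbf{w}\,\mathbf{U}^{-1}$ satisfies $\mathbf{v}\,\bar{\mathbf{H}}_r=\mathbf{R}$; its entries are causal because $\mathbf{w}$ and $\mathbf{U}^{-1}$ have entries in $\mathbb{C}[z^{-1}]$.

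The only genuinely new point relative to Proposition~\ref{prop:modMINT} is the passage from a polynomial vector to a polynomial matrix, and the step that needs the most care is the causality bookkeeping: the construction stays inside $\mathbb{C}[z^{-1}]$ precisely because Smith-form change-of-basis matrices over a principal ideal domain are unimodular, and because the determinantal-divisor hypothesis is exactly what makes $\mathbf{D}$'s nontrivial block a constant --- hence invertible without leaving $\mathbb{C}[z^{-1}]$. No hard estimates arise. As a sanity check, for $K=1$ the matrix $\bar{\mathbf{H}}_r$ is a column, its $K$-minors are the entries $\mathcal{H}_i$, $i\neq r$, and ``gcd a nonzero constant'' is co-primality, so the statement collapses to Proposition~\ref{prop:modMINT}.
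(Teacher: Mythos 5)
Your proposal is correct and follows essentially the same route as the paper: the ``furthermore'' part is the identical constant-term evaluation at $z=\infty$, and the existence part sets $\mathcal{U}_r=0$, factors $\mathcal{H}_{r,k}(z)-\mathcal{H}_{r,k}(\infty)=z^{-1}\mathcal{P}_k(z)$, and solves $\mathbf{v}\,\bar{\bo{H}}_r=[\mathcal{P}_1,\ldots,\mathcal{P}_K]$ using the equivalence between the gcd-of-$K$-minors hypothesis and a constant-diagonal Smith form. The only (immaterial) difference is packaging: the paper produces a causal left inverse $\bo{V}$ of $\bar{\bo{H}}_r$ via row operations alone (its Prop.~\ref{prop:smith}) and sets $\mathbf{v}=[\mathcal{P}_1,\ldots,\mathcal{P}_K]\bo{V}$, whereas you solve the same linear system through the full two-sided Smith decomposition with unimodular factors.
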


\begin{remark}\label{rem:equivalence}
For an $N\times K$ matrix $\bo{H}$ with $K\leq N$, it can be shown that, the greatest common divisor of all the $K$-minors of $\bo{H}$ is a non-zero constant if and only if the Smith canonical form of $\bo{H}$ has a  non-zero constant diagonal. This equivalence is shown in Prop.~\ref{prop:equivalence} in Appendix~\ref{app:Smith}. Therefore, we can replace the hypothesis in  Prop.~\ref{prop:MCLP2} with an alternative, stated in terms of the Smith canonical form of $\bar{\bo{H}}_r$.
\end{remark}

When we apply this result to our observation model with $K$ sources, we obtain the following proposition.
\begin{prop}\label{prop:main}
For a fixed frequency  $\omega$, let 
\begin{equation}\label{eqn:prophi}
h_{i,k}(s) = \tilde{H}_{i,k}(s,\omega)\,\exp(-j\omega\,\tau_{i,k}), 
\end{equation}
for $i\in\{1,\ldots,N\}$, $k\in \{1,\ldots, K\}$. Also, let $\bo{H}$ be defined as in \eqref{eqn:modelMC}, and $\bar{\bo{H}}_r$ be defined as in Prop.~\ref{prop:MCLP2}. Finally, let $y_i(s) = Y_i(s,\omega)$.

If the greatest common divisor of all the $K$-minors of $\bar{\bo{H}}_r$ is a non-zero constant, then, for $d = 1$, there exist a set of filters of the form \eqref{eqn:MCLPfilters} such that 
\begin{equation}\label{eqn:propMCLP3}
\sum_{i=1}^N g_i(s) \ast h_{i,k}(s) = c^{(r)}_k\, \delta(s)
\end{equation}
for some constants $c^{(r)}_k$.

Further, for a set of filters of the form \eqref{eqn:MCLPfilters}, if \eqref{eqn:propMCLP3} holds, and Assumption~\ref{assm:main} is in effect, 
then we have
\begin{equation}\label{eqn:propMCLPDOA}
\sum_{i=1}^N\,g_i(s) \ast y_i(s) = c'\,\sum_{k=1}^K\,x_k(s)\,\exp(-j\,\omega\,\tau_{r,k}),
\end{equation}
where $c'$ is a constant independent of $r$.
\begin{proof}
See Appendix~\ref{app:main}.
\end{proof}
\end{prop}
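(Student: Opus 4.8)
The plan is to obtain Prop.~\ref{prop:main} as a single-frequency specialization of Prop.~\ref{prop:MCLP2}, so that the genuinely new ingredient is just Assumption~\ref{assm:main}; essentially all of the structural work has already been done in Appendix~\ref{app:MCLP2}. For the existence claim I would fix the frequency $\omega$ and regard each $\z{H}_{i,k}(z)$ as the $z$-transform of the finite causal sequence $h_{i,k}(s) = \tilde{H}_{i,k}(s,\omega)\,\exp(-j\omega\,\tau_{i,k})$; the scalar factor $\exp(-j\omega\,\tau_{i,k})$ only rescales a polynomial in $z^{-1}$, so $\z{H}_{i,k}$ is still causal and Prop.~\ref{prop:MCLP2} is applicable. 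With this identification $\bo{H}$ and $\bar{\bo{H}}_r$ are exactly the matrices in Prop.~\ref{prop:MCLP2}, and the hypothesis of Prop.~\ref{prop:main} --- that the greatest common divisor of all $K$-minors of $\bar{\bo{H}}_r$ is a non-zero constant --- is literally the hypothesis of Prop.~\ref{prop:MCLP2}. Invoking that proposition with $d=1$ gives filters $\z{G}_1,\dots,\z{G}_N$ of the form \eqref{eqn:MCLPfilters} satisfying $\begin{bmatrix}\z{G}_1 & \cdots & \z{G}_N\end{bmatrix}\bo{H} = c^{(r)}$; reading off the $k$-th entry, $\sum_i \z{G}_i(z)\,\z{H}_{i,k}(z) = c^{(r)}_k$, and taking the inverse $z$-transform yields \eqref{eqn:propMCLP3}.

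For the second part I would start from the single-frequency form of the observation model \eqref{eqn:STFTdomain}: writing $x_k(s) := X_k(s,\omega)$, it reads $y_i(s) = \sum_{k=1}^K x_k(s)\ast h_{i,k}(s)$. Given any set of filters of the form \eqref{eqn:MCLPfilters} for which \eqref{eqn:propMCLP3} holds, linearity of convolution together with an interchange of the two finite sums gives
\[
\sum_{i=1}^N g_i(s)\ast y_i(s) \;=\; \sum_{k=1}^K x_k(s)\ast\Bigl(\sum_{i=1}^N g_i(s)\ast h_{i,k}(s)\Bigr) \;=\; \sum_{k=1}^K c^{(r)}_k\, x_k(s).
\]
It then only remains to identify the constants $c^{(r)}_k$. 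By the second assertion of Prop.~\ref{prop:MCLP2}, $c^{(r)}_k = \z{H}_{r,k}(\infty) = h_{r,k}(0) = \tilde{H}_{r,k}(0,\omega)\,\exp(-j\omega\,\tau_{r,k})$. This is the step at which Assumption~\ref{assm:main} is used: it says that $\tilde{H}_{r,k}(0,\omega)$ is the same for every microphone index, so this common value is a constant $c'$ that does not depend on $r$; hence $c^{(r)}_k = c'\,\exp(-j\omega\,\tau_{r,k})$, and substituting into the display above gives \eqref{eqn:propMCLPDOA}.

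Because the substantive content --- the existence of dereverberating filters of the restricted form \eqref{eqn:MCLPfilters} and the explicit value of the resulting constant --- is carried by Prop.~\ref{prop:MCLP2}, there is no genuinely hard step remaining; the only care needed is in making the bridge between the two propositions airtight: checking that the sequences $h_{i,k}(s)$ are causal so that Prop.~\ref{prop:MCLP2} applies as stated, that $\z{H}_{r,k}(\infty)$ really equals $h_{r,k}(0)$, and that \eqref{eqn:propMCLP3} is the faithful time-domain transcription of \eqref{eqn:propMCLP2}. The one conceptually important observation, and the place where the modeling assumption actually pays off, is that Assumption~\ref{assm:main} is exactly what splits each $c^{(r)}_k$ into an $r$-independent gain times the steering phase $\exp(-j\omega\,\tau_{r,k})$; this is what makes the right-hand side of \eqref{eqn:propMCLPDOA} depend on the reference microphone only through the direction-of-arrival phases, i.e.\ what turns the reverberant mixture into an anechoic one.
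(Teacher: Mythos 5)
Your proposal is correct and follows essentially the same route as the paper's own proof in Appendix~\ref{app:main}: the existence claim is the time-domain transcription of Prop.~\ref{prop:MCLP2}, and the second claim follows from the single-frequency observation model, the interchange of convolution sums, the identification $c^{(r)}_k = \mathcal{H}_{r,k}(\infty) = h_{r,k}(0)$, and Assumption~\ref{assm:main}. The only (shared) gloss is that pulling $c'$ outside the sum over $k$ also tacitly uses that the common value $\tilde{H}_{r,k}(0,\omega)$ does not depend on the source index $k$, which the paper likewise asserts without further comment.
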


This proposition suggests that, in a scenario with multiple sources, MCLP, if successful, suppresses reflections and preserves the direct signal even when there exist multiple sources from different directions. Therefore, after the MCLP step, the recordings appear as if they are recorded with the same microphone array, but in a non-reverberant environment.
This in turn allows to read off the directions of the sources with a simple procedure like MUSIC \cite{sch86p276}. Once the source directions are determined, we can then achieve separation by using \emph{known} manifold vectors of the microphone array. 

\section{Implementation Details}
The discussion after Prop.~\ref{prop:main} sets the stage for the proposed framework. The main building blocks of the framework are shown in Fig.~\ref{fig:framework}. We first input the $N$ observations $y_1 , \ldots,  y_N$, recorded in a reverberant environment with unknown reverberation, to an MCLP block. This block applies MCLP for every possible choice of the reference microphone. The output of this block is $N$ dereverberated signals $\hat{y}_1,\ldots , \hat{y}_N$. These signals are input to a DOA estimation algorithm (such as MUSIC \cite{sch86p276}) and the directions of the $K$ sources are determined. Then these estimated directions are used to separate the sources in $\hat{y}_i$ using a non-blind source separation method. The estimated sources are denoted as $\hat{x}_i$ in Fig.~\ref{fig:framework}. In the following, we briefly present the algorithm we used for a practical implementation of the components. However, we remark that following MCLP, the problem turns into a non-reverberant source separation problem (with known array geometry) and one can use any method of choice for this task.

\begin{figure}
\centering
\includegraphics[scale=1]{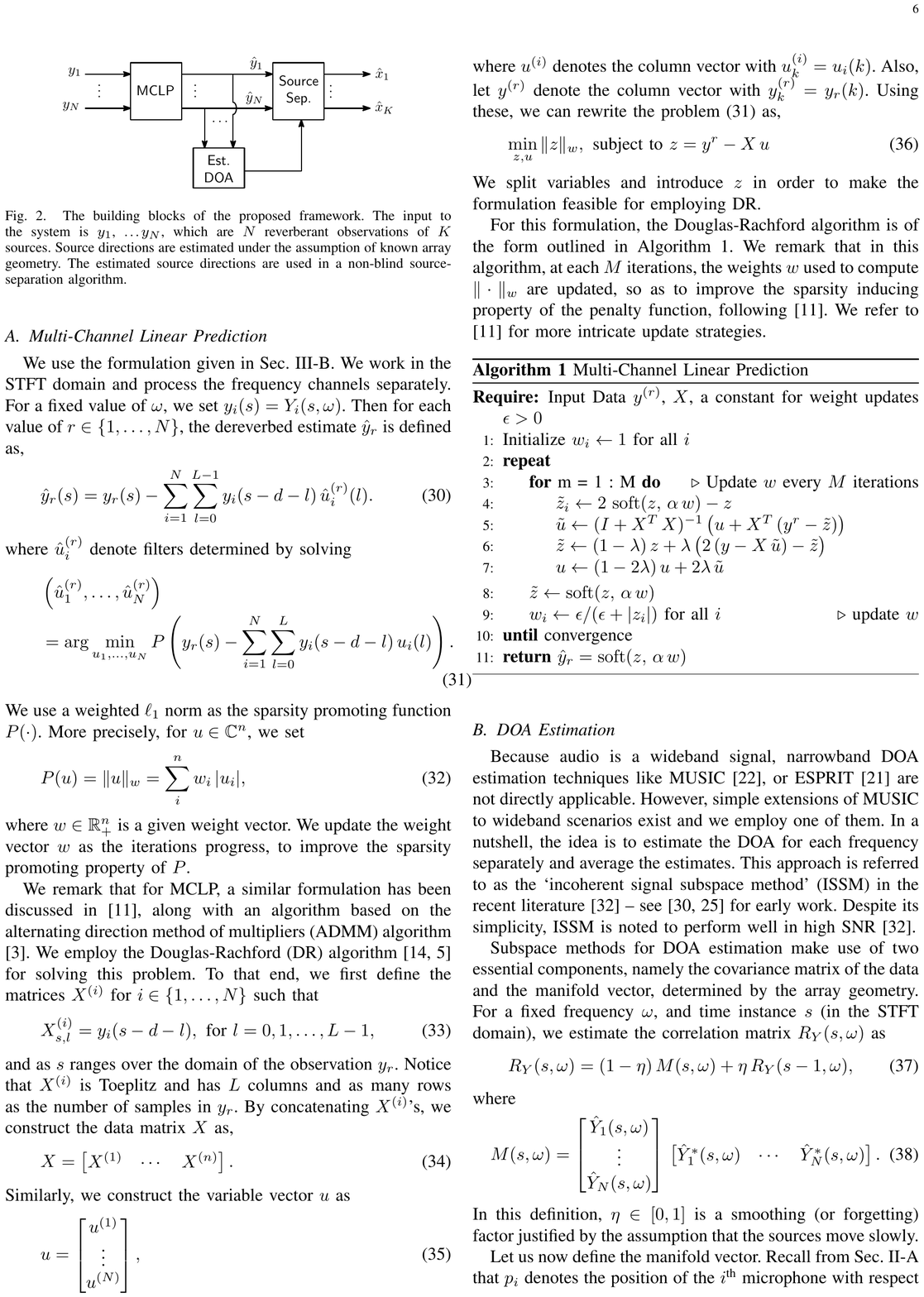}
\caption{The building blocks of the proposed framework. The input to the system is $y_1$, \ldots $y_N$, which are $N$ reverberant observations of $K$ sources. 
Source directions are estimated under the assumption of known array geometry. The estimated source directions are used in a non-blind source-separation algorithm. \label{fig:framework}}
\end{figure}

\subsection{Multi-Channel Linear Prediction}
We use the formulation given in Sec.~\ref{sec:MultiSourceMCLP}. We work in the STFT domain and process the frequency channels separately. For a fixed value of $\omega$, we set $y_i(s) = Y_i(s,\omega)$. Then for each value of $r \in \{1,\ldots,N\}$, the dereverbed estimate  $\hat{y}_r$ is defined as,
\begin{equation}
\hat{y}_r(s) = y_r(s) - \sum_{i=1}^N \sum_{l = 0}^{L-1} y_i(s-d - l)\, \hat{u}^{(r)}_i(l).
\end{equation}
where $\hat{u}^{(r)}_i$ denote filters determined by solving
\begin{multline}\label{eqn:MCLPform}
\left(\hat{u}^{(r)}_1, \ldots, \hat{u}^{(r)}_N \right) \\= \arg \min_{u_1,\ldots,u_N}  
P\left(y_r(s) - \sum_{i=1}^N \sum_{l = 0}^L y_i(s-d - l)\, u_i(l) \right).
\end{multline}
We use a weighted $\ell_1$ norm as the sparsity promoting function $P(\cdot)$. More precisely, for $u \in \mathbb{C}^n$, we set 
\begin{equation}\label{eqn:wl1}
P(u) = \|u\|_w = \sum_{i}^{n} w_i \, |u_i|,
\end{equation}
where $w \in \mathbb{R}_+^n$ is a given weight vector. We update the weight vector $w$ as the iterations progress, to improve the sparsity promoting property of $P$. 

We remark that for MCLP, a similar formulation  has been discussed in \cite{juk16}, along with an algorithm based on the alternating direction method of multipliers (ADMM) algorithm \cite{boy11p1}. We employ the Douglas-Rachford (DR) algorithm \cite{lio79p964, combettes_chp} for solving this problem. To that end, we first define the matrices $X^{(i)}$ for $i\in \{1,\ldots,N\}$ such that
\begin{equation}
X^{(i)}_{s,l} = y_i(s - d - l), \text{ for }l=0,1,\ldots,L-1,
\end{equation}
and as $s$  ranges over the domain of the observation $y_r$.
Notice that $X^{(i)}$ is Toeplitz and has $L$ columns and as many rows as the number of samples in $y_r$. By concatenating $X^{(i)}$'s, we construct the data matrix $X$ as,
\begin{equation}
X = \begin{bmatrix}
X^{(1)} & \cdots & X^{(n)}
\end{bmatrix}.
\end{equation}
Similarly, we construct the variable vector $u$ as 
\begin{equation}
u = \begin{bmatrix} u^{(1)} \\ \vdots \\ u^{(N)} \end{bmatrix},
\end{equation}
where $u^{(i)}$ denotes the column vector with $u^{(i)}_k = u_i(k)$. Also, let $y^{(r)}$ denote the column vector with $y^{(r)}_k = y_r(k)$. 
Using these, we can rewrite the problem \eqref{eqn:MCLPform} as,
\begin{equation}
\min_{z,u} \| z \|_w, \text{ subject to } z = y^{r} - X\,u
\end{equation}
We split variables and introduce $z$ in order to make the formulation feasible for employing DR.

For this formulation, the Douglas-Rachford algorithm is of the form outlined in Algorithm~\ref{algo:DR}. We remark that in this algorithm, at each $M$ iterations, the weights $w$ used to compute $\|\cdot\|_w$ are updated, so as to improve the sparsity inducing property of the penalty function, following \cite{juk16}. We refer to \cite{juk16} for more intricate update strategies.

\begin{algorithm}\caption{Multi-Channel Linear Prediction}\label{algo:DR}
\begin{algorithmic}[1]
\Require Input Data $y^{(r)}$, $X$, a constant for weight updates $\epsilon>0$
\State Initialize $w_i \gets  1$ for all $i$
\Repeat
\For{m = 1 : M} \Comment Update $w$ every $M$ iterations
\State $\tilde{z}_i \gets 2\,\soft(z,\,\alpha\, w) - z$
\State $\tilde{u} \gets (I + X^T\,X)^{-1}\,\bigl( u + X^T\,(y^{r} - \tilde{z}) \bigr)$
\State $\tilde{z} \gets (1-\lambda)\,z + \lambda\,\bigl(2\,(y - X\,\tilde{u}) - \tilde{z} \bigr)$
\State $u \gets (1-2\lambda)\,u + 2\lambda\,\tilde{u}$
\EndFor
\State $\tilde{z} \gets \soft(z,\,\alpha\, w)$
\State $w_i \gets \epsilon / (\epsilon + |z_i|)$ for all $i$ \Comment{update  $w$}
\Until{convergence}
\State \Return $\hat{y}_r = \soft(z,\,\alpha\, w)$
\end{algorithmic}
\end{algorithm}

\subsection{DOA Estimation}\label{sec:DOA}
Because audio is a wideband signal, narrowband DOA estimation techniques like MUSIC \cite{sch86p276}, or ESPRIT \cite{roy89p984} are not directly applicable. However, simple extensions of MUSIC to wideband scenarios exist and we employ one of them. In a nutshell, the idea is to estimate the DOA for each frequency separately and average the estimates. This approach is  referred to as the `incoherent signal subspace method' (ISSM) in the recent literature \cite{yoo06p977} -- see \cite{wax84p817,su83p502} for early work. Despite its simplicity, ISSM is noted to perform well in high SNR \cite{yoo06p977}.

Subspace methods for DOA estimation make use of two essential components, namely the covariance matrix of the data and the manifold vector, determined by the array geometry. For a fixed frequency $\omega$, and time instance $s$ (in the STFT domain), we estimate the correlation matrix $R_Y(s,\omega)$ as
\begin{equation}
R_Y(s,\omega) = (1-\eta)\,M(s,\omega) + \eta\,R_Y(s-1,\omega),
\end{equation}
where 
\begin{equation}
M(s,\omega) = 
\begin{bmatrix}
\hat{Y}_1(s,\omega) \\
\vdots \\
\hat{Y}_N(s,\omega)
\end{bmatrix}
\,
\begin{bmatrix}
\hat{Y}_1^*(s,\omega) & \cdots & \hat{Y}_N^*(s,\omega)
\end{bmatrix}.
\end{equation}
In this definition, $\eta \in [0,1]$ is a smoothing (or forgetting) factor justified by the assumption that the sources move slowly.

Let us now define the manifold vector. Recall from Sec.~\ref{sec:tdmodel} that $p_i$ denotes the position of the $i\thh$ microphone with respect to the center of the array. Also, let $u_{\theta}$ denote the unit vector in the direction $\theta$. Finally, let $c$ denote the speed of sound, $f_s$ denote the sampling frequency and 
\begin{equation}
\tau_{i,\theta} = - \frac{\langle p_i, u_{\theta}\rangle}{c} \,f_s
\end{equation}
denote the relative delay (in samples) of the $i\thh$ microphone  for a source in the direction $\theta$. The manifold vector is defined as 
\begin{equation}
a_{\theta}(\omega) = \begin{bmatrix}
\exp\bigl(-j \tau_{1,\theta}\,\omega \bigr) \\
\vdots\\
\exp\bigl(-j \tau_{N,\theta}\,\omega \bigr)
\end{bmatrix}.
\end{equation}
Assuming there are $K$ sources, let $C_K(s,\omega)$ denote the $(N-K) \times N$ unitary matrix whose columns are the eigenvectors of $R_Y(s,\omega)$ corresponding to the  smallest $N-K$ eigenvalues. The DOA function is defined as 
\begin{equation}\label{eqn:DOAfunct}
D(s,\theta) = \left( \sum_{\omega} a_{\theta}^*(\omega)\, C_K(s,\omega) \, a_{\theta}(\omega)\right)^{-1}.
\end{equation}
For each time instance $s$, the $K$ dominant peaks of $D(s,\theta)$ with respect to $\theta$ are taken to be the directions of the sources.
\subsection{Source Separation}\label{sec:SS}
For source separation, we use the geometric source separation (GSS) method \cite{par02p352,valin04IROS} complemented with a simple post-filter. We assume that the sources are stationary, so that the estimated directions are constant with respect to time. Extension to moving sources is straightforward, at least in principle, and will not be further discussed.

\subsubsection{Geometric Source Separation}\label{sec:GSS}
Let $\theta_i$, for $i=1,\ldots, K$ denote the directions of the sources determined by the method described in Sec.~\ref{sec:DOA}. For each $\omega$, we define a matrix $A(\omega)$ as,
\begin{equation}
A(\omega) = \begin{bmatrix}
a_{\theta_1}(\omega) & \cdots & a_{\theta_K}(\omega)
\end{bmatrix}
\end{equation}
We model the MCLP outputs as
\begin{equation}
\underbrace{\begin{bmatrix}
\hat{Y}_1(s,\omega) \\
\vdots \\
\hat{Y}_N(s,\omega)
\end{bmatrix}}_{\hat{Y}(s,\omega)} =
A(\omega)\,
\underbrace{\begin{bmatrix}
X_1(s,\omega) \\
\vdots \\
X_K(s,\omega)
\end{bmatrix}}_{X(s,\omega)}.
\end{equation}
This model is not exact because despite the significant dereverberation by  MCLP, $\hat{Y}$ still contains reflections other than the direct signal.

For this model, GSS proposes to estimate $X(s,\omega)$ as $\hat{W}(\omega)\,\hat{Y}(s,\omega)$, where the $K\times N$ weight matrix $\hat{W}(\omega)$ is determined by solving a minimization problem. To describe the problem, let us denote 
\begin{equation}
S(s,\omega) = W(\omega)\,Y(s,\omega),
\end{equation}
for a given weight matrix $W$. Also, let $R_S(\omega)$ denote the empirical autocorrelation matrix of $S(s,\omega)$, obtained by averaging $S(s,\omega)\,S^*(s,\omega)$ over the time variable $s$.
Given this notation, $\hat{W}(\omega)$ is defined as the solution of 
\begin{multline}
\min_{W(\omega)} \left\{ C\bigl(W(\omega)\bigr) = \frac{\gamma}{2}\,\bigl\| R_S(\omega) - \diag\bigl(R_S(\omega) \bigr) \bigr\|_F^2 \right. \\
\left. + \frac{1}{2}\,\bigl\| W(\omega)\,A(\omega) - I \bigr\|_F^2 \right\},
\end{multline}
where $\|\cdot\|_F$ denotes the Frobenius norm.

Thanks to the differentiability of both terms in the cost function, gradient descent can be applied for obtaining a stationary point. For a fixed $\omega$, we remark that the gradient \cite{bra83p11} of the complex valued $C(W)$ is given as \cite{par02p352},
\begin{equation}
\frac{\partial C(W)}{\partial W^*} = \gamma\,\bigl(R_S - \diag(R_S)\bigr)\,W\,R_{\hat{Y}} + (W\,A - I)\,A^*,
\end{equation}
where  $R_{\hat{Y}}$ denotes the empirical autocorrelation matrix of $\hat{Y}$, obtained by averaging $\hat{Y}^*(s,\omega)\,\hat{Y}(s,\omega)$ over the time variable $s$.

\subsubsection{Post-Filter}\label{sec:PF}

The GSS algorithm aims to achieve source separation by considering the best linear combination of the inputs. In practice, separation performance can be improved by post-filtering. Specifically, we view the separated signals as contaminated with noise and eliminate this noise using a soft-threshold, where the time-frequency varying threshold value depends on an estimate of the remaining noise variance for each time-frequency sample. 

More specifically, if $\hat{S}_1$, \ldots, $\hat{S}_K$ denotes the $K$ source estimates obtained by GSS, for each value of the triplet $(i,s,\omega)$, we regard $\hat{S}_i(s,\omega)$ as an unbiased estimate of $X_i(s,\omega)$ with variance $\sigma_i^2(s,\omega)$. The post-filtered estimate is formed as,
\begin{equation}
\hat{X}_i(s,\omega) = \soft\Bigl(\hat{S}_i(s,\omega), \alpha\,\sigma_i(s,\omega) \Bigr),
\end{equation}
where $\alpha$ is a parameter.

In order to realize such a post-filter, we need to estimate $\sigma_i^2(s,\omega)$. We  estimate $\sigma_i^2(s,\omega)$ as, 
\begin{multline}\label{eqn:sigma}
\frac{1}{N-1}\,\sum_{n=1}^N\, |W_{i,n}(s,\omega)| \\ \times  \left|\Bigl(\hat{Y}_n(s,\omega) - a_{\theta_i,n}(\omega) \,\hat{S}_i(s,\omega) \Bigr) \right|^2 ,
\end{multline}
where $a_{\theta_i,n}(\omega)$ denotes the $n\thh$ entry of $ a_{\theta_i}(\omega)$. While this expression may make intuitive sense, further justification is provided in Appendix~\ref{app:PF}. We also remark that this is a generalized version of the post-filter employed in \cite{bay15p272}.

\section{Experiments}\label{sec:exp}
Our primary claim in this paper is that an MCLP step suppresses reflections other than the direct source, under a certain assumption (Assumption~\ref{assm:main}). The experiments in this section are geared towards demonstrating the claims and verifying Assumption~\ref{assm:main}. There are three experiments. In the first, we verify Assumption~\ref{assm:main} using real impulse responses obtained in a classroom. In the second experiment, we demonstrate that an MCLP preprocessing step eliminates reflections other than the direct source, by showing how the DOA function changes after the MCLP step. Finally, the third experiment validates our claim about the effect of employing an MCLP preprocessing step in a source separation problem. 

All of the experiments are performed on real data. In order to obtain the data, we used a circular microphone array with eight omni-directional microphones. The configuration and the definition of DOA, namely $\theta$, are as shown in Fig.~\ref{fig:MicSetup}. The microphones are distributed uniformly on a circle of radius 5~cm.

In the experiments, we use an STFT with a smooth window of length 64~ms, and hop-size is 16~ms. The window is selected such that the STFT is self-inverting (i.e., a Parseval frame). The sampling frequency for the audio signals is 16 KHz. For the MCLP block, the parameters (see Algorithm~\ref{algo:DR}) are chosen as $d = 2$, $L=30$, $M = 50$, $\alpha = 0.05$, $\lambda = 0.5$. We run the algorithm for 2000 iterations.

Some of the audio signals used in the experiments, as well as those from experiments not contained in the paper can be listened to at ``http://web.itu.edu.tr/ibayram/MCLP/''.

\begin{figure}
\centering
\includegraphics[scale=1]{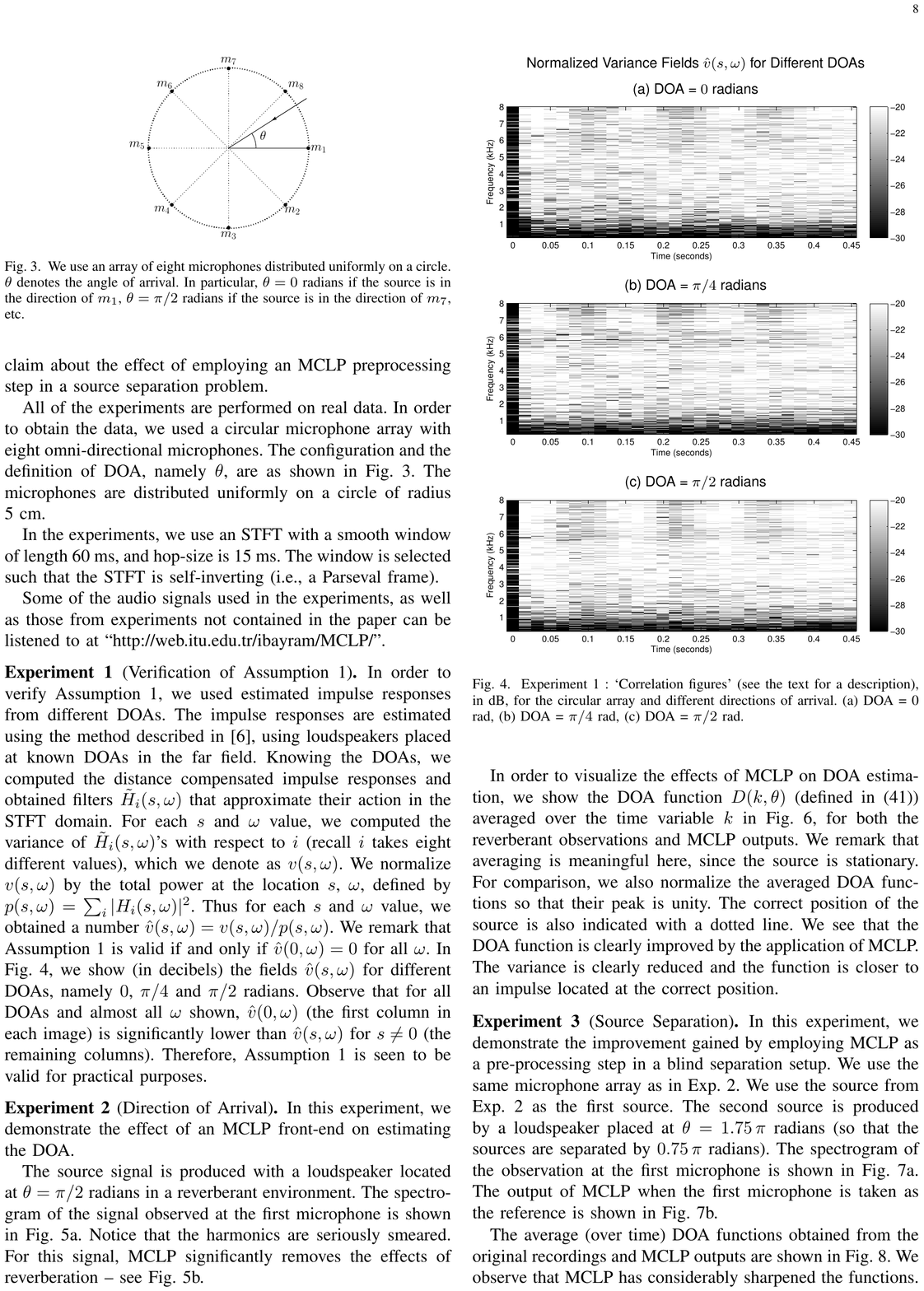}
\caption{We use an array of eight microphones distributed uniformly on a circle. $\theta$ denotes the angle of arrival. In particular, $\theta = 0$ radians if the source is in the direction of $m_1$, $\theta = \pi/2$ radians if the source is in the direction of $m_7$, etc. \label{fig:MicSetup}}
\end{figure}

\begin{experiment}[Verification of Assumption~\ref{assm:main}]\label{exp:assm}

In order to verify Assumption~\ref{assm:main}, we used estimated impulse responses from different DOAs. The impulse responses are estimated using the method described in \cite{far00aes}, using loudspeakers placed at known DOAs in the far field. Knowing the  DOAs, we computed the distance compensated impulse responses and obtained filters $\tilde{H}_i(s,\omega)$ that approximate their action in the STFT domain. 
For each $s$ and $\omega$ value, we computed the variance of $\tilde{H}_i(s,\omega)$'s with respect to $i$ (recall $i$ takes eight different values), which we denote as $v(s,\omega)$. We normalize $v(s,\omega)$ by  the total power at the location $s$, $\omega$, defined by $p(s,\omega) = \sum_{i} |H_i(s,\omega)|^2$. Thus for each $s$ and $\omega$ value, we obtained a number $\hat{v}(s,\omega) = v(s,\omega) / p(s,\omega)$. We remark that Assumption~\ref{assm:main} is valid if and only if $\hat{v}(0,\omega) = 0$ for all $\omega$. In Fig.~\ref{fig:Corr}, we show (in decibels) the fields $\hat{v}(s,\omega)$ for different DOAs, namely $0$, $\pi/4$ and $\pi/2$ radians. Observe that for all DOAs and almost all $\omega$ shown, $\hat{v}(0,\omega)$ (the first column in each image) is significantly lower than $\hat{v}(s,\omega)$  for $s\neq 0$ (the remaining columns). Therefore, Assumption~\ref{assm:main} is seen to be valid for practical purposes.

\begin{figure}
\renewcommand{\sc}{0.5}
\centering 
\includegraphics[scale=1]{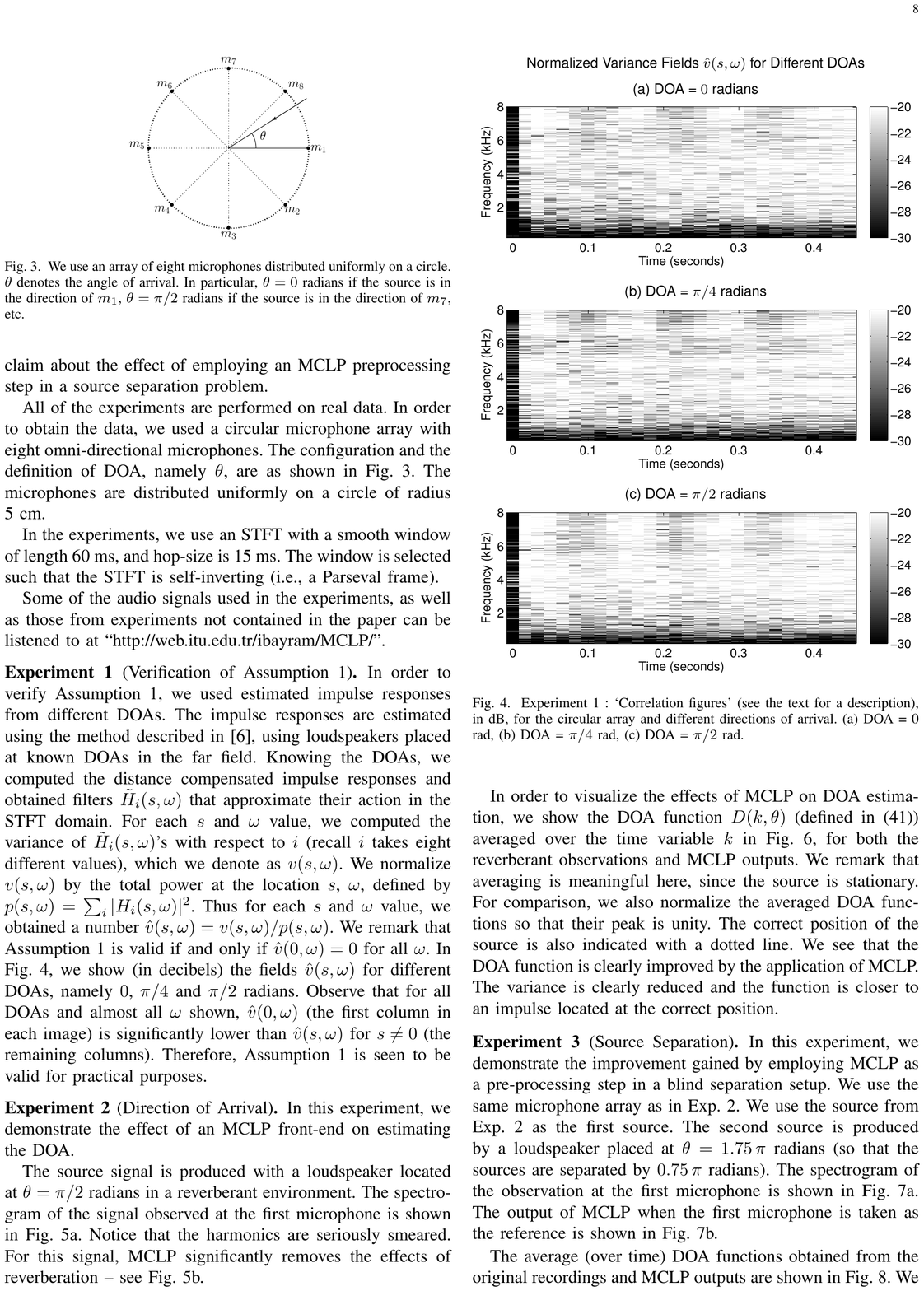}
\caption{Experiment~\ref{exp:assm} : `Correlation figures' (see the text for a description), in dB, for the circular array and different directions of arrival. (a) DOA = 0 rad, (b) DOA = $\pi/4$ rad, (c) DOA = $\pi/2$ rad. \label{fig:Corr}}
\end{figure}

\end{experiment}

\begin{experiment}[Direction of Arrival]\label{exp:DOA}

In this experiment, we demonstrate the effect of an MCLP front-end on estimating the DOA.

The source signal is produced with a loudspeaker located at $\theta = \pi/2$ radians in a reverberant environment. The spectrogram of the signal observed at the first microphone is shown in Fig.~\ref{fig:DOAspect}a. Notice that the harmonics are seriously smeared. For this signal, MCLP significantly removes the effects of reverberation -- see Fig.~\ref{fig:DOAspect}b.

\begin{figure}
\centering
\includegraphics[scale=1]{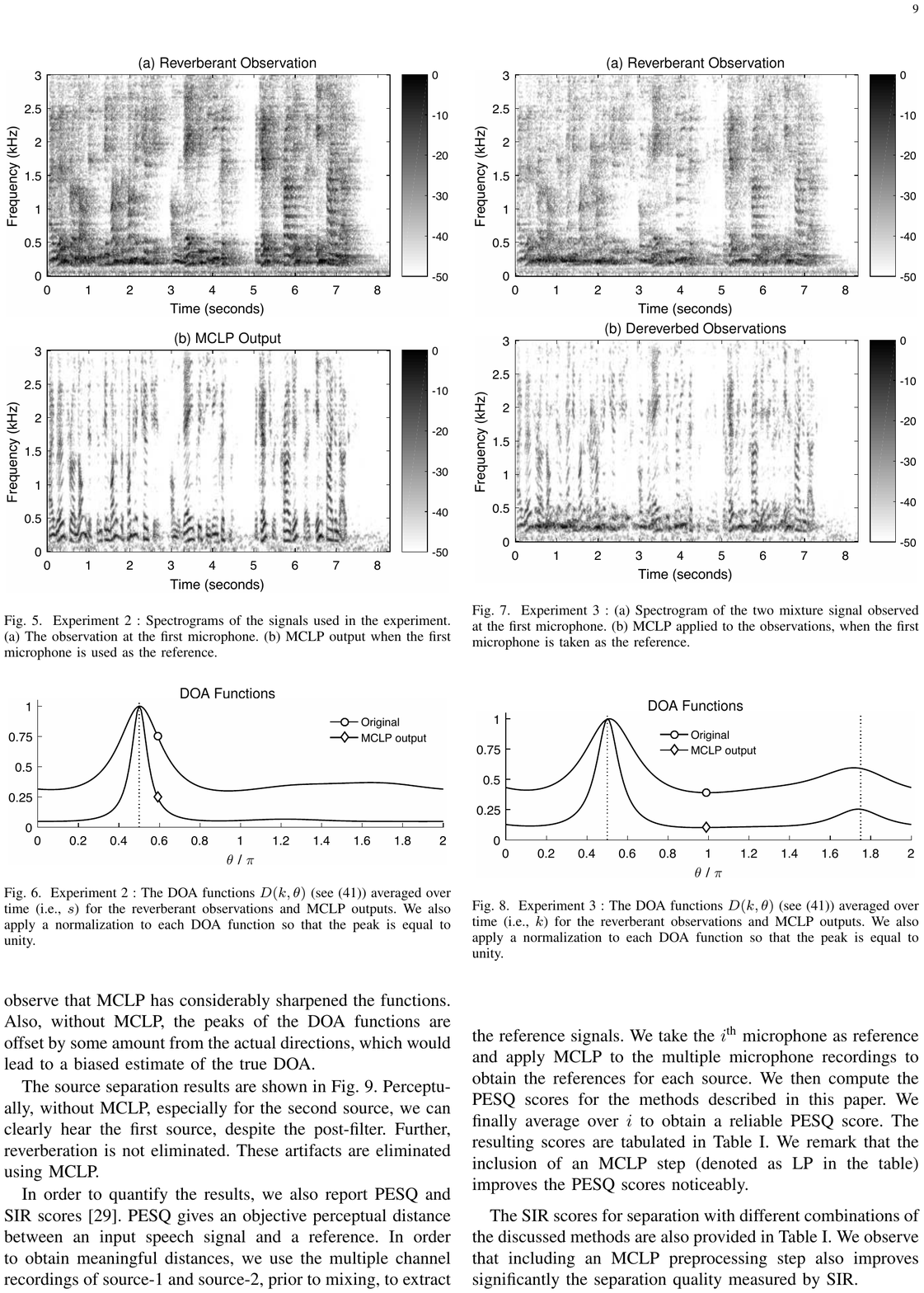}
\caption{Experiment~\ref{exp:DOA} : Spectrograms of the signals used in the experiment. (a) The observation at the first microphone. (b) MCLP output when the first microphone is used as the reference. \label{fig:DOAspect}}
\end{figure}

In order to visualize the effects of MCLP on DOA estimation, we show the DOA function $D(k,\theta)$  (defined in \eqref{eqn:DOAfunct}) averaged over the time variable $k$ in Fig.~\ref{fig:DOA}, for both the reverberant observations and MCLP outputs. We remark that averaging is meaningful here, since the source is stationary. For comparison, we also normalize the averaged DOA functions so that their peak is unity. The correct position of the source is also indicated with a dotted line. We see that the DOA function is clearly improved by the application of MCLP. The variance is clearly reduced and the function is closer to an impulse located at the correct position.

\begin{figure}
\centering
\includegraphics[scale=1]{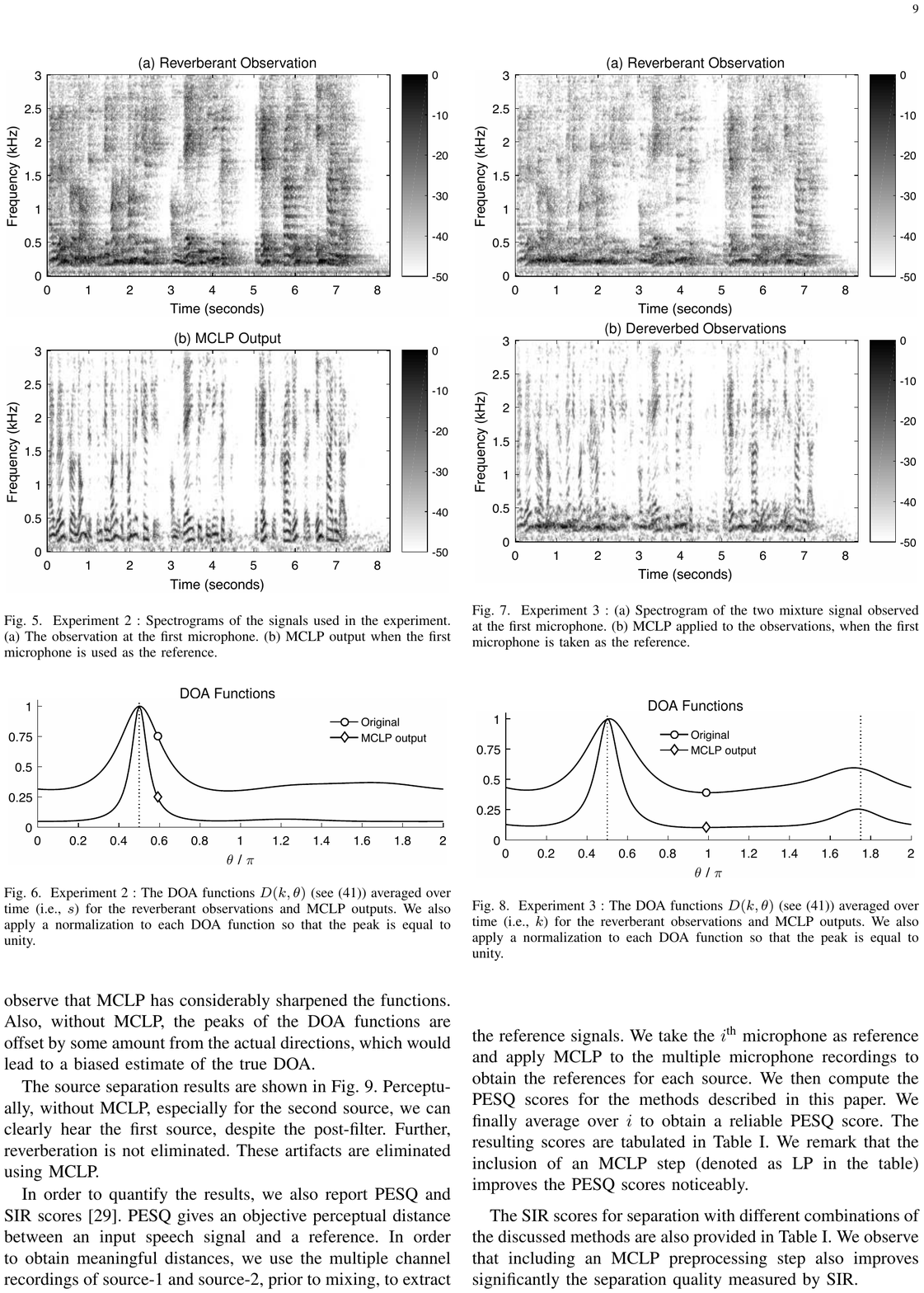}
\caption{Experiment~\ref{exp:DOA} : The DOA functions $D(k,\theta)$ (see \eqref{eqn:DOAfunct}) averaged over time (i.e., $s$) for the reverberant observations and MCLP outputs. We also apply a normalization to each DOA function so that the peak is equal to unity.\label{fig:DOA}}
\end{figure}
\end{experiment}

\begin{experiment}[Source Separation]\label{exp:ss}
In this experiment, we demonstrate the improvement gained by employing MCLP as a pre-processing step in a blind separation setup. We use the same microphone array as in Exp.~\ref{exp:DOA}. We use the source from Exp.~\ref{exp:DOA} as the first source. The second source is produced by a loudspeaker placed at $\theta = 1.75\,\pi$ radians (so that the sources are separated by $0.75\,\pi$  radians). The spectrogram of the observation at the first microphone is shown in Fig.~\ref{fig:SpectMixed}a. The output of MCLP when the first microphone is taken as the reference is shown in Fig.~\ref{fig:SpectMixed}b. 

\begin{figure}
\centering
\includegraphics[scale=1]{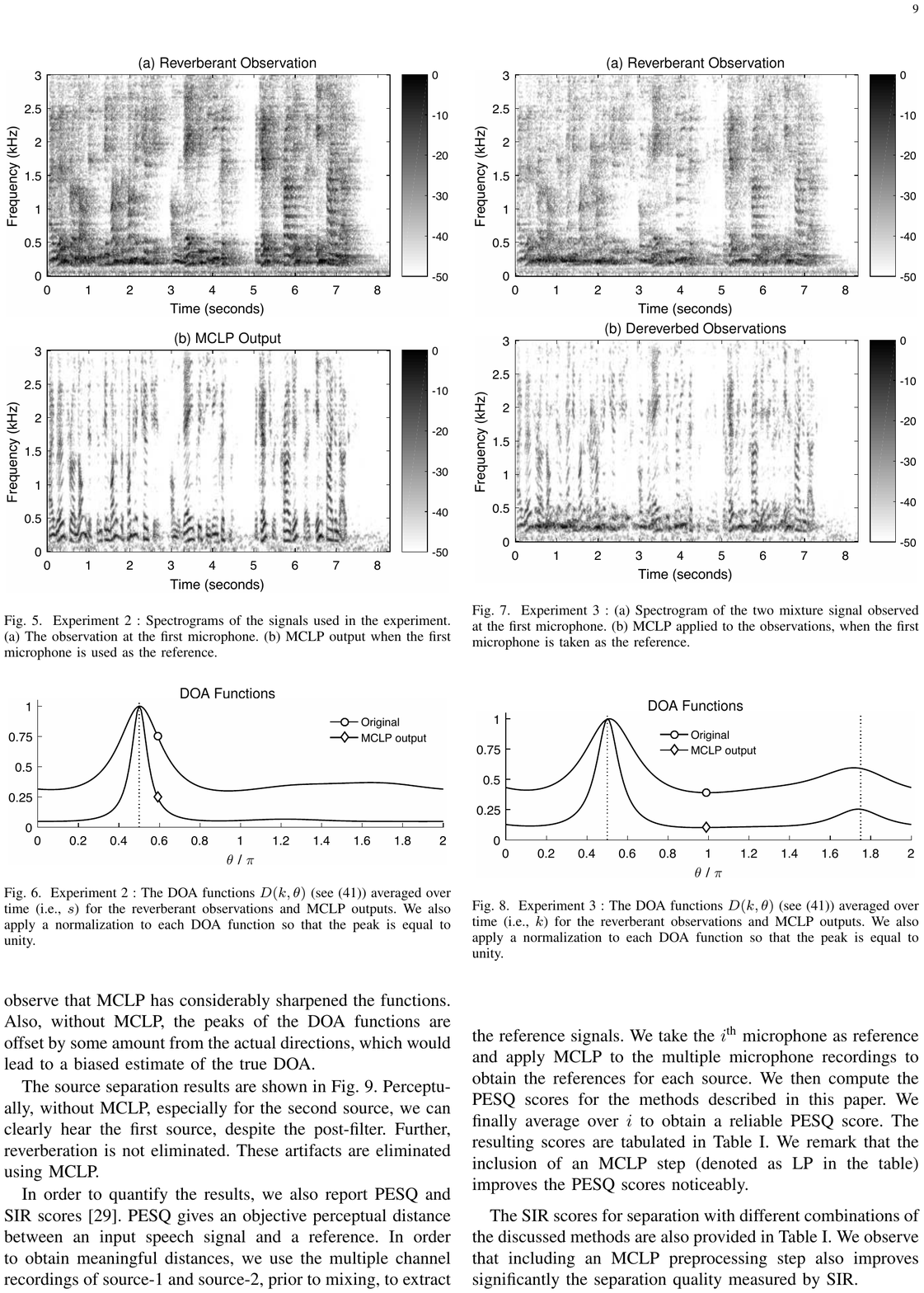}
\caption{ Experiment~\ref{exp:ss} : (a) Spectrogram of the two mixture signal observed at the first microphone. (b) MCLP applied to the observations, when the first microphone is taken as the reference. \label{fig:SpectMixed}}
\end{figure}

The average (over time) DOA functions obtained from the original recordings and MCLP outputs are shown in Fig.~\ref{fig:MixMUSIC}. We observe that MCLP has considerably sharpened the functions. Also, without MCLP, the peaks of the DOA functions are offset by some amount from the actual directions, which would lead to a biased estimate of the true DOA.

\begin{figure}
\centering
\includegraphics[scale=1]{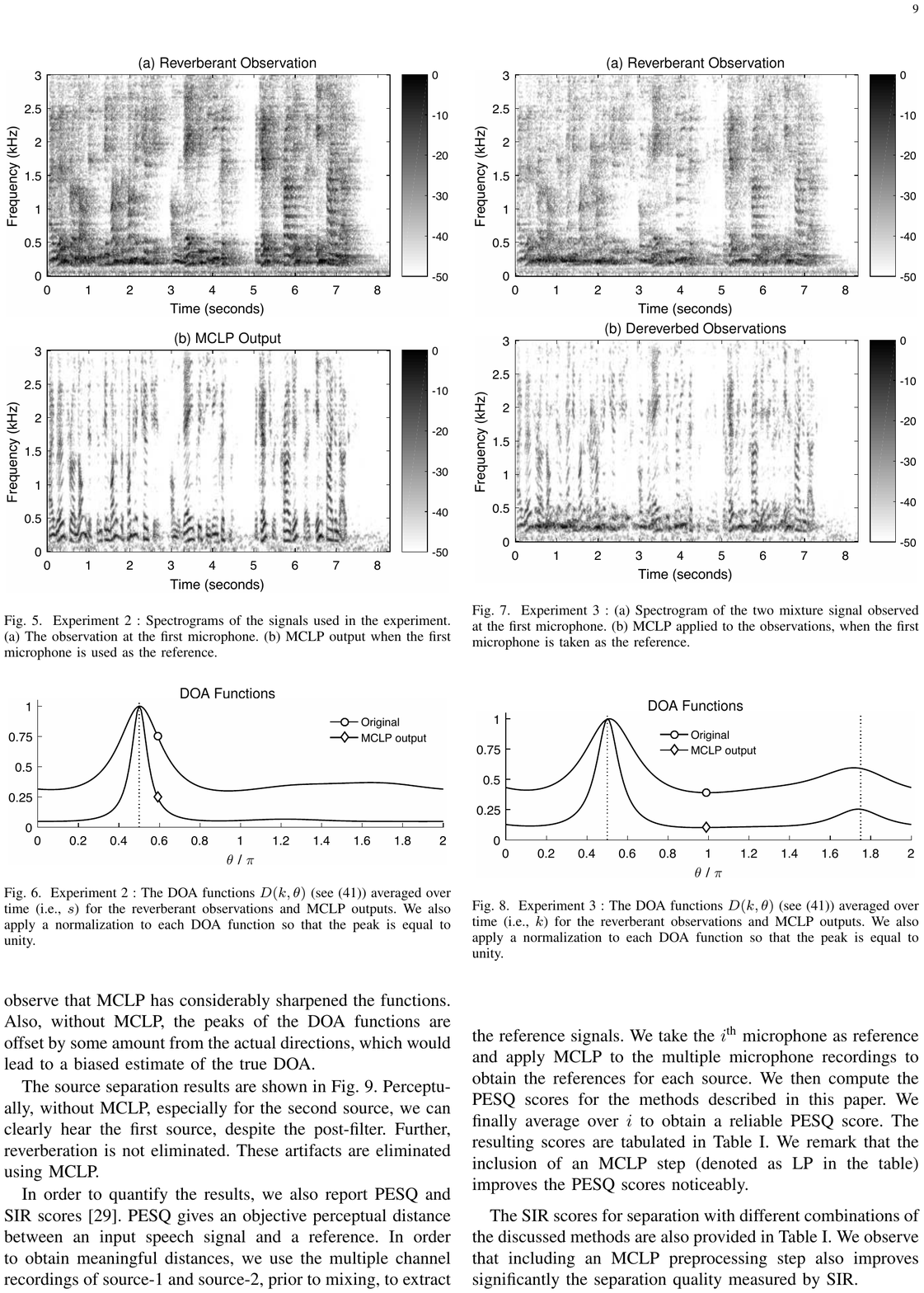}
\caption{Experiment~\ref{exp:ss} : The DOA functions $D(k,\theta)$ (see \eqref{eqn:DOAfunct}) averaged over time (i.e., $k$) for the reverberant observations and MCLP outputs. We also apply a normalization to each DOA function so that the peak is equal to unity.\label{fig:MixMUSIC}}
\end{figure}

The source separation results are shown in Fig.~\ref{fig:SpectSeparate}. Perceptually, without MCLP, especially for the second source, we can clearly hear the first source, despite the post-filter. Further, reverberation is not eliminated. These artifacts are eliminated using MCLP.

\begin{figure}
\centering
\includegraphics[scale=1]{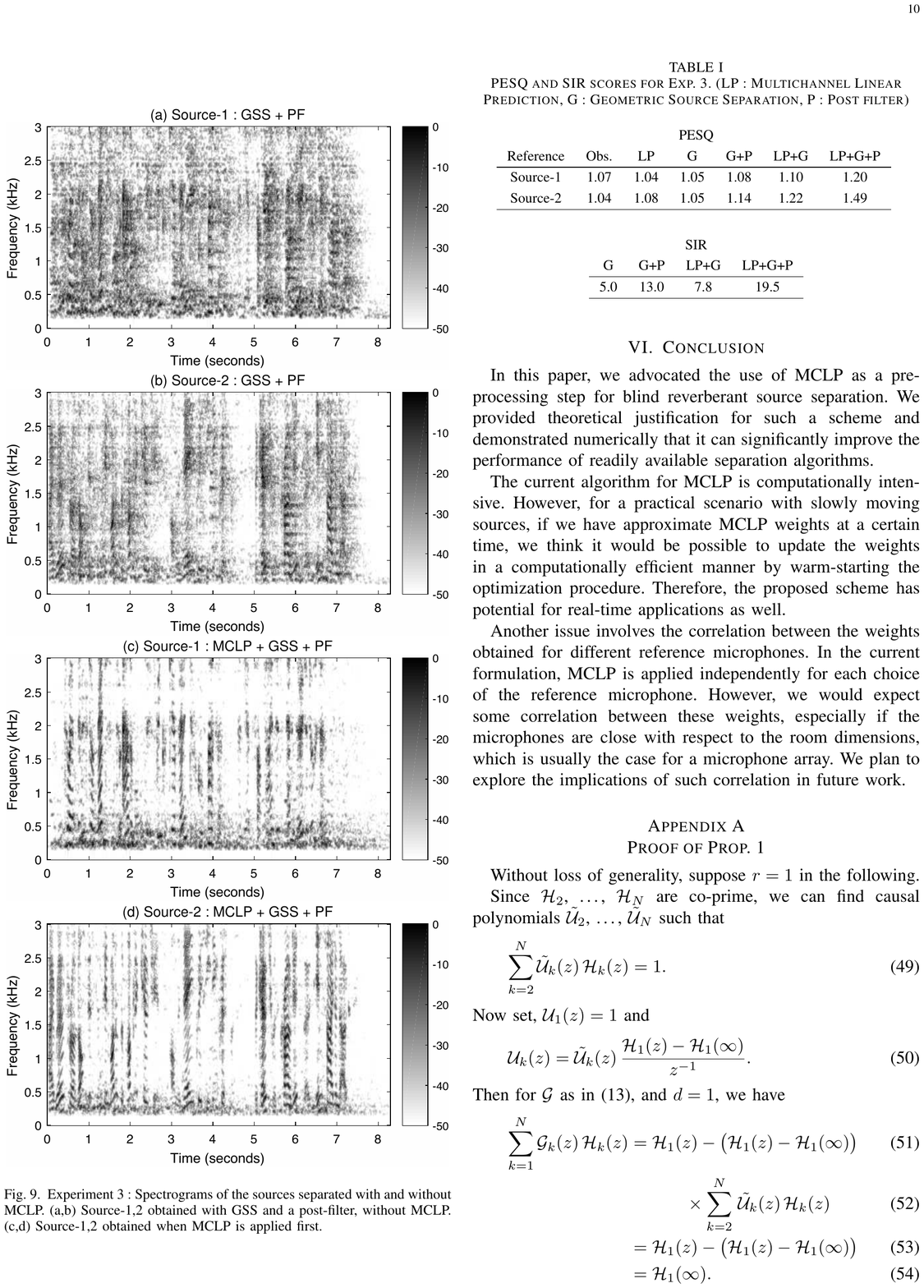}
\caption{Experiment~\ref{exp:ss} : Spectrograms of the sources separated with and without MCLP. (a,b) Source-1,2 obtained with GSS and a post-filter, without MCLP. (c,d) Source-1,2 obtained when MCLP is applied first.  \label{fig:SpectSeparate}}
\end{figure}

In order to quantify the results, we also report PESQ and SIR scores \cite{vin06p462}. PESQ gives an objective perceptual distance between an input speech signal and a reference. In order to obtain meaningful distances, we use the multiple channel recordings of source-1 and source-2, prior to mixing, to extract the reference signals. We take the $i\thh$ microphone as reference and apply MCLP to the multiple microphone recordings to obtain the references for each source. We then compute the PESQ scores for the methods described in this paper. We finally average over $i$ to obtain a reliable PESQ score. The resulting scores are tabulated in Table~\ref{table:PESQ1}. We remark that the inclusion of an MCLP step (denoted as LP in the table) improves the PESQ scores noticeably.

The SIR scores for separation with different combinations of the discussed methods are also provided in Table~\ref{table:PESQ1}. We observe that including an MCLP preprocessing step also improves significantly the separation quality measured by SIR.

\begin{table}
\renewcommand{\arraystretch}{1.3}
\centering
\caption{PESQ  and SIR scores for Exp.~\ref{exp:ss}. (LP : Multichannel Linear Prediction, G : Geometric Source Separation, P : Post filter)\label{table:PESQ1}}
\begin{tabular}{c}
PESQ
\end{tabular}\\
\begin{tabular}{c c c c c c c}
Reference & Obs. & LP &  G & G+P &  LP+G & LP+G+P\\
\hline
Source-1 & 1.07 & 1.04 & 1.05 & 1.08 & 1.10 & 1.20 \\
Source-2 & 1.04 & 1.08 & 1.05 & 1.14 & 1.22 & 1.49 \\
\hline
\end{tabular}
\vspace{0.5cm}

\begin{tabular}{c}
SIR
\end{tabular}\\
\begin{tabular}{c c c c}
G & G+P & LP+G & LP+G+P\\
\hline
5.0 & 13.0 & 7.8 & 19.5\\
\hline
\end{tabular}
\end{table}

\end{experiment}

\section{Conclusion}

In this paper, we advocated the use of MCLP as a pre-processing step for blind reverberant source separation. We provided theoretical justification for such a scheme and demonstrated numerically that it can significantly improve the performance of readily available separation algorithms.

The current algorithm for MCLP is computationally intensive. However, for a practical scenario with slowly moving sources, if we have approximate MCLP weights at a certain time, we think it would be possible to update the weights in a computationally efficient manner by warm-starting the optimization procedure. Therefore, the proposed scheme has potential for real-time applications as well.

Another issue involves the correlation between the weights obtained for different reference microphones. In the current formulation, MCLP is applied independently for each choice of the reference microphone. However, we would expect some correlation between these weights, especially if the microphones are close with respect to the room dimensions, which is usually the case for a microphone array. We plan to explore the implications of such correlation in future work.

\appendices
\section{Proof of Prop.~\ref{prop:modMINT}}\label{app:modMINT}
Without loss of generality, suppose $r = 1$ in the following.

Since $\mathcal{H}_2$, \ldots, $\mathcal{H}_N$ are co-prime, we can find causal polynomials $\tilde{\mathcal{U}}_2$, \ldots, $\tilde{\mathcal{U}}_N$ such that 
\begin{equation}
\sum_{k=2}^N \mathcal{\tilde{U}}_k(z) \,\mathcal{H}_k(z) = 1.
\end{equation}
Now set, $\mathcal{U}_1(z) = 1$ and 
\begin{equation}
\mathcal{U}_k(z) = \tilde{\mathcal{U}}_k (z) \, \frac{ \mathcal{H}_1(z) -  \mathcal{H}_1(\infty)}{z^{-1}}.
\end{equation}
Then for $\mathcal{G}$ as in \eqref{eqn:MCLPfilters}, and $d = 1$, we have 
\begin{align}
\sum_{k=1}^N \mathcal{G}_k(z) \,\mathcal{H}_k(z) & = 
\mathcal{H}_1(z)  - \bigl(\mathcal{H}_1(z) -  \mathcal{H}_1(\infty)\bigr) \\
&\quad \quad \quad\,\times\sum_{k=2}^N\,\mathcal{\tilde{U}}_k(z) \,\mathcal{H}_k(z)\\
&= \mathcal{H}_1(z) - \bigl(\mathcal{H}_1(z) -  \mathcal{H}_1(\infty)\bigr)\\
&= \mathcal{H}_1(\infty).
\end{align}

For the converse, observe that for causal $\mathcal{U}_k$ and $\mathcal{H}_k$, the polynomial
\begin{equation}
\mathcal{C}(z) = \sum_{k=1}^N z^{-d}\,\mathcal{U}_k(z)\,\mathcal{H}_k(z)
\end{equation}
does not have any constant terms, i.e., $C(\infty) = 0$. Observe also that $\mathcal{G}_1(\infty) = 1$.  Thus if \eqref{eqn:propMCLP} holds, then
\begin{align}
\left. \sum_{k=1}^N \mathcal{G}_k(z) \,\mathcal{H}_k(z) \right|_{z = \infty} &= \mathcal{G}_1(\infty)\,\mathcal{H}_1(\infty) - C(\infty)\\
& = \mathcal{H}_1(\infty).
\end{align}

\section{Canonical Form of a Polynomial Matrix}\label{app:Smith}
In this appendix, we briefly present, for the sake of completeness, the (Smith) canonical form of a polynomial matrix, along with some of its properties. For a more detailed exposition, we refer to the fine treatment in Sec.~VI.1--3 of Gantmacher's book \cite{Gantmacher1}.

We refer to a quantity or matrix as non-zero constant if it is a polynomial of order zero.

Let us start by introducing three \emph{elementary row operations} on a polynomial matrix $\bo{A}(z)$ :
\begin{enumerate}
\item Multiplication of a row with a non-zero constant.
\item Addition, to the $j\thh$ row, the product of the $i\thh$ row and a polynomial $p(z)$.
\item Interchange of two rows.
\end{enumerate}
These operations can be performed by left multiplication with a square matrix whose determinant is a non-zero constant. We refer to matrices that can be used to perform elementary operations as \emph{elementary matrices} in this appendix.  We also remark that similar operations can be defined on the columns instead of rows. Elementary column operations can be performed by right-multiplication with elementary matrices.

We say that two polynomial matrices $\bo{A}$ and $\bo{A}'$ are equivalent if $\bo{A}$ can be transformed to $\bo{A}'$ by elementary row/column operations.

Let us now introduce the (Smith) canonical form.
\begin{defn}\label{def:smith}
An  $N\times K$ (with $K \leq N$) diagonal polynomial matrix $\bo{A}(z)$ of rank $r$ with diagonal $a_i(z)$ for $i\in \{1,\ldots,K\}$ is said to be in (Smith) canonical form if 
\begin{enumerate}
\item $a_i(z) \neq 0$ if $i\leq r$, $a_i(z) = 0$ if $i>r$,
\item $a_i(0) = 1$ for $i\leq r$,
\item $a_i(z)$ divides $a_{i+1}(z)$ for $i<r$.
\end{enumerate}
\end{defn}

A fundamental result is that we can transform any polynomial matrix to one in canonical form via elementary row/column operations. 
\begin{prop}
Each $N \times K$ polynomial matrix is equivalent to a unique matrix in canonical form.
\end{prop}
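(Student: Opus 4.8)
**Proof proposal for the statement that every $N\times K$ polynomial matrix is equivalent to a unique matrix in canonical form.**

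The plan is to prove existence and uniqueness separately. For \emph{existence}, I would argue by induction on $\min(N,K)$ using a ``smallest degree'' pivoting argument, closely parallel to the reduction of an integer matrix to Smith normal form over a Euclidean domain (here the Euclidean domain is $\R[z]$ or $\C[z]$, with the degree playing the role of the Euclidean valuation). First I would dispose of the zero matrix trivially. Otherwise, among all nonzero entries reachable from $\bo{A}$ by elementary row/column operations, pick one of minimal degree and, by row and column interchanges, move it to position $(1,1)$; call it $a_{11}(z)$. Then I would use division with remainder: for each $i>1$, write the $(i,1)$ entry as $q_i(z)\,a_{11}(z)+\rho_i(z)$ with $\deg \rho_i < \deg a_{11}$, and subtract $q_i(z)$ times row $1$ from row $i$ (elementary operation of type 2). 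If some $\rho_i\neq 0$ this produces an entry of strictly smaller degree, contradicting minimality unless all $\rho_i=0$; so after these operations the first column is $(a_{11},0,\ldots,0)^T$. Do the same with columns to clear the first row. The classical subtlety is that clearing the first row may reintroduce nonzero entries in the first column; this terminates because each such cycle strictly decreases $\deg a_{11}$, which is bounded below. A further step is needed to guarantee the divisibility chain: if $a_{11}$ does not divide some entry $a_{ij}$ of the remaining block, add row $i$ to row $1$ and repeat the clearing; again $\deg a_{11}$ strictly drops, so eventually $a_{11}$ divides every entry of the lower-right $(N-1)\times(K-1)$ block. At that point $\bo{A}$ is equivalent to $\operatorname{diag}(a_{11}, \bo{A}')$ with every entry of $\bo{A}'$ divisible by $a_{11}$; apply the induction hypothesis to $\bo{A}'$ and, since divisibility by $a_{11}$ is preserved by elementary operations on $\bo{A}'$, the divisibility chain extends. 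Finally, normalize each nonzero diagonal entry $a_i(z)$ by its leading coefficient (an elementary row operation of type 1) so that $a_i(0)=1$; wait — that last normalization should be by the value at a point where the entry is nonzero, but in fact the canonical form in Definition~\ref{def:smith} requires $a_i(0)=1$, so I would instead argue that each $a_i$ can be taken with $a_i(0)=1$ by scaling, noting that the invariant factors here have nonzero constant term precisely in the situation of interest; in general one scales by the leading coefficient and the $a_i(0)=1$ condition is the normalization convention adopted in this paper.

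For \emph{uniqueness}, I would invoke the standard invariant-factor characterization: for $1\le k\le \operatorname{rank}(\bo{A})$, let $D_k(z)$ be the monic greatest common divisor of all $k\times k$ minors of $\bo{A}(z)$, with the convention $D_0=1$. The key facts are (i) elementary row/column operations do not change the ideal generated by the $k\times k$ minors, hence do not change $D_k$ — this follows because each elementary operation expresses the new $k$-minors as $\R[z]$-linear combinations of the old ones and is invertible over $\R[z]$ — and (ii) for a matrix already in canonical form with diagonal $a_1,\ldots,a_r$, the divisibility chain $a_i\mid a_{i+1}$ forces $D_k = a_1 a_2\cdots a_k$ up to a unit. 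Combining (i) and (ii): if two canonical forms are equivalent to $\bo{A}$, they have the same $D_k$'s, hence the same products $a_1\cdots a_k$, hence (dividing consecutive ones, using $a_i\neq 0$ for $i\le r$) the same $a_k = D_k/D_{k-1}$. The normalization $a_i(0)=1$ pins down the remaining unit ambiguity, giving literal equality. I would also note $\operatorname{rank}$ is invariant under elementary operations, so the two canonical forms have the same rank $r$ and the same pattern of zero diagonal entries.

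The main obstacle is the termination/well-definedness bookkeeping in the existence argument: one must be careful that the interleaved row-clearing and column-clearing steps, together with the ``$a_{11}$ divides everything'' step, genuinely terminate, which relies on the fact that every problematic step strictly decreases the nonnegative integer $\deg a_{11}$. Once this is set up cleanly, everything else is routine: the division algorithm in $\R[z]$, the minor-ideal invariance for uniqueness, and the elementary-matrix bookkeeping. I would present the existence proof as an explicit (finite) algorithm with a monovariant, and the uniqueness proof via the gcd-of-minors invariants $D_k(z)$, since that also sets up Remark~\ref{rem:equivalence} and Prop.~\ref{prop:equivalence} used elsewhere in the paper.
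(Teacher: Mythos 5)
Your proposal is correct, but note that the paper does not actually prove this proposition: it is quoted as a classical fact, with the reader referred to Gantmacher (Sec.~VI.1--3) for the details, so there is no in-paper argument to compare against. What you have written is precisely the classical proof that the citation points to: existence by Euclidean reduction in $\R[z]$ (minimal-degree pivot, division with remainder, the degree of the pivot as the terminating monovariant, and the extra ``make the pivot divide the remaining block'' step needed for the divisibility chain), and uniqueness via the determinantal divisors $D_k$ and the quotients $a_k = D_k/D_{k-1}$. The uniqueness route is also exactly the machinery the paper itself sets up immediately after the proposition (it introduces $D_r$ and $i_r = D_r/D_{r-1}$ and asserts $a_r = i_r$), so your argument dovetails with, and in effect justifies, that later discussion. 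The one wrinkle you rightly flag is the normalization: Definition~1 in the paper asks for $a_i(0)=1$, which read literally is not always achievable (an invariant factor with zero constant term cannot be scaled to have constant term one); the standard convention is to take the $a_i$ monic, and some such normalization is genuinely needed to remove the unit ambiguity in the uniqueness step. This is a defect of the paper's definition rather than of your proof, and it is immaterial in the paper's application, where the diagonal of the canonical form is constant.
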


In general, the canonical form is reached by performing both row and column operations. However, we are interested in row operations only, for the development in this paper. We have the following result.
\begin{prop}\label{prop:smith}
Suppose an $N\times K$ matrix $A$ has rank $K$ and the diagonal of its canonical form is constant. Then, $A$ can be reduced to its canonical form via row operations.
\end{prop}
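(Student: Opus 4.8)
The plan is to exhibit a single square polynomial matrix $E$ of size $N\times N$ with non-zero constant determinant (``unimodular'') such that $EA$ equals the canonical form of $A$, and then to invoke the classical fact, treated in Sec.~VI.1--3 of \cite{Gantmacher1} and implicit in the setup of Appendix~\ref{app:Smith}, that over the (Euclidean) ring of polynomials every unimodular matrix is a finite product of elementary matrices. That last fact is exactly what converts ``left-multiplication by $E$'' into ``a finite sequence of elementary row operations,'' which is the assertion to be proved.

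First I would pin down the canonical form $D$ of $A$ explicitly. Since $A$ has rank $K$, all $K$ diagonal entries $a_i(z)$ of $D$ are non-zero; being constant by hypothesis and satisfying $a_i(0)=1$ by item~2 of Definition~\ref{def:smith}, each is identically equal to $1$, so $D=\begin{bmatrix} I_K \\ 0\end{bmatrix}$. Next, by the existence/uniqueness theorem for the canonical form recalled just above, there are unimodular matrices $P$ (of size $N\times N$) and $Q$ (of size $K\times K$) with $PAQ=D$. Then $PA = D\,Q^{-1} = \begin{bmatrix} Q^{-1} \\ 0\end{bmatrix}$, whose bottom $N-K$ rows already vanish while the top block is the $K\times K$ unimodular matrix $Q^{-1}$. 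Left-multiplying by the block-diagonal unimodular matrix $\diag(Q,\,I_{N-K})$ then gives
\begin{equation*}
\bigl(\diag(Q,\,I_{N-K})\,P\bigr)\,A \;=\; \begin{bmatrix} Q\,Q^{-1} \\ 0\end{bmatrix} \;=\; \begin{bmatrix} I_K \\ 0\end{bmatrix} \;=\; D.
\end{equation*}
So $E:=\diag(Q,\,I_{N-K})\,P$ is a product of two unimodular matrices, hence unimodular, and $EA=D$; applying the elementary-factorization fact to $E$ finishes the proof.

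The only genuine obstacle is the imported result just used --- that a unimodular polynomial matrix factors into elementary matrices --- which really does rely on the Euclidean structure of the polynomial ring (the corresponding statement can fail over general principal ideal domains) but is classical and already cited in this appendix, so I would simply invoke it. If one preferred a fully self-contained argument, the same proof repackages as an explicit row-reduction: carry out only the \emph{row} operations appearing in the reduction of $A$ to its Smith form (after these the last $N-K$ rows are zero and the leading $K\times K$ block is unimodular), and then clear that $K\times K$ block to $I_K$ by Gaussian elimination over the polynomial ring --- eliminating each column using the division algorithm and then back-substituting. This uses exactly the same Euclidean-domain ingredient while staying at the level of concrete row manipulations; the block-triangular version above is merely a shorter packaging of it.
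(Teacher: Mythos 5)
Your proof is correct, but it takes a genuinely different route from the paper's. You obtain a single unimodular matrix $E$ with $EA=D$ from the two-sided decomposition $PAQ=D$ and then invoke the classical theorem that a unimodular polynomial matrix factors into elementary matrices; that theorem is indeed in Gantmacher and is valid over $\mathbb{C}[z]$, but note that this appendix only records the easy direction (elementary operations correspond to multiplication by unimodular matrices), so you are importing an extra result rather than citing something already stated. The paper avoids that theorem entirely: it row-reduces $A$ directly to the upper-triangular form of Prop.~\ref{prop:U}, observes that the only non-vanishing $K$-minor of that triangular matrix is the product of its diagonal entries, and uses the invariance of the determinantal divisor $D_K$ under elementary operations (together with the hypothesis that the canonical form has constant diagonal, which forces $D_K$ to be a non-zero constant) to conclude that this product is constant; hence each diagonal entry is a non-zero constant, the degree condition in Prop.~\ref{prop:U} forces the off-diagonal entries to vanish, and a final rescaling gives the canonical form $\bigl[\begin{smallmatrix} I_K \\ 0 \end{smallmatrix}\bigr]$, which you correctly identify from Definition~\ref{def:smith}. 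Your version is shorter at the top level but leans on heavier machinery; the paper's is self-contained given Prop.~\ref{prop:U} and the invariance of the $D_r$. Your closing ``fully self-contained'' repackaging --- keep only the row operations so that $PA=\bigl[\begin{smallmatrix} Q^{-1} \\ 0 \end{smallmatrix}\bigr]$, then clear the unimodular leading block by division-algorithm elimination --- essentially reconstructs the paper's argument (the elimination step is Prop.~\ref{prop:U} applied to that block, with the determinant argument forcing it down to a constant diagonal), so the two proofs converge there.
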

This result is a corollary of the discussion in \cite{Gantmacher1} but is not stated explicitly. We provide a proof for the sake of completeness. We need a few intermediate results for the proof. All of the non-obvious claims stated before the proof of Prop.~\ref{prop:smith}  can be found in \cite{Gantmacher1}.

Given an $N\times K$ polynomial matrix $A(z)$, let $D_r(z)$ denote the greatest common divisor of all minors of $A$ of order $r$, normalized such that the leading coefficient is unity. Notice that $D_r$ is divisible by $D_{r-1}$. Therefore, for $r>1$, the ratios $i_r = D_r / D_{r-1}$ are polynomials. We also set $i_1 = D_1$. It can be shown that $D_r$'s are invariant under elementary row/column operations. This property transfers to $i_r$ as a consequence of its definition. 

Suppose now that the canonical form of $A(z)$ consists of a diagonal matrix with elements $a_r(z)$ as in Defn.~\ref{def:smith}. It turns out that $a_r = i_r$, giving us an alternative definition of $a_r$.

We need a final result before proving Prop.~\ref{prop:smith}, which can be shown by a simple elimination argument.
\begin{prop}\label{prop:U}
Using elementary row operations, any polynomial matrix $B(z)$ can be reduced to an upper triangular matrix $U(z)$ such that, at each column of $U(z)$, if the polynomial on the diagonal is non-zero, then the degree of any off-diagonal polynomial is strictly smaller than the degree of the polynomial on the diagonal.
\end{prop}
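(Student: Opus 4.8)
The plan is to carry out a Gaussian‑elimination‑type reduction, one column at a time from left to right, using polynomial division with remainder (available since polynomials over a field form a Euclidean domain), and to append to the elimination of each column an ``upward'' reduction step that enforces the degree bound. I would first isolate a subroutine that acts on a single column $k$ and touches only rows $k, k+1, \ldots, N$: among the nonzero entries of column $k$ in those rows, pick one of least degree and move it to position $(k,k)$ by a row interchange; then for each $j>k$ divide $B_{j,k} = q_j(z)\,B_{k,k}(z) + \rho_j(z)$ with $\deg \rho_j < \deg B_{k,k}$ and subtract $q_j(z)$ times row $k$ from row $j$, replacing $B_{j,k}$ by $\rho_j$. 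If some $\rho_j$ is still nonzero, repeat the subroutine on column $k$; the least degree occurring among the nonzero entries of column $k$ in rows $k,\ldots,N$ strictly decreases with each pass and cannot go negative, so after finitely many passes column $k$ has no nonzero entry below row $k$ — or it is identically zero there, which happens exactly when $B_{k,k}$ has been reduced to $0$. (When the minimal degree reaches $0$, a nonzero constant sits at $(k,k)$ and annihilates every entry below it in one pass, which is the real reason the loop terminates.)

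Next, with $k$ fixed and $B_{k,k}\neq 0$ after the subroutine, I would perform the upward reduction: for each $i<k$ write $B_{i,k} = q_i(z)\,B_{k,k}(z) + \rho_i(z)$ with $\deg \rho_i < \deg B_{k,k}$ and subtract $q_i(z)$ times row $k$ from row $i$, so that $\deg B_{i,k} < \deg B_{k,k}$. Together with the zeros below position $(k,k)$, this is exactly the degree condition the statement demands of column $k$; and if $B_{k,k}=0$ there is nothing to enforce for that column.

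Finally I would iterate ``subroutine, then upward reduction'' for $k=1,2,\ldots,K$ and check that processing column $k$ disturbs nothing already achieved in columns $1,\ldots,k-1$. The subroutine modifies only rows $k,\ldots,N$, and in those rows columns $1,\ldots,k-1$ are already identically zero (they are below the diagonal of already‑processed columns), so row interchanges and combinations among them keep those columns zero; the upward reduction subtracts only multiples of row $k$, whose entries in columns $1,\ldots,k-1$ are likewise already zero, so it changes no entry in those columns. Hence the subdiagonal zeros, the degree inequalities, and the diagonal entries secured for earlier columns all survive, and after the step $k=K$ the matrix is upper triangular with, in every column having a nonzero diagonal entry, all off‑diagonal entries of strictly smaller degree — using only elementary row operations throughout.

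I expect the one point genuinely needing care to be this ordering/bookkeeping argument: the within‑column elimination must be confined to rows $\ge k$ and the upward reduction to multiples of row $k$, precisely so that the triangular shape and the degree bounds of the already‑processed columns are preserved; if one instead reduced upward ``too early'' using rows that still had content to the left, the triangularization would be undone. The termination statements (strictly decreasing minimal degree per pass, bottoming out at a constant) also deserve an explicit sentence. Beyond these, the argument is a routine elimination.
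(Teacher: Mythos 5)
Your proof is correct and is exactly the ``simple elimination argument'' the paper alludes to: the paper gives no details for this proposition, and your column-by-column Euclidean division with a downward elimination pass, an upward degree-reduction pass, and the bookkeeping check that earlier columns are undisturbed is the standard way to fill them in. The termination argument (strictly decreasing minimal degree in the working sub-column) and the observation that all row operations touch only rows whose earlier columns are already zero are precisely the points that need to be said, and you say them.
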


We are now ready for the proof of Prop.~\ref{prop:smith}.
\begin{proof}[Proof of Prop.~\ref{prop:smith}]
Suppose the canonical form of $A$ is $B$. Observe that the only non-zero minor of $B$ of order $K$ is given by the product of its diagonals, which is unity. Therefore, the invariant polynomial $D_K$ of $A$ is a constant.

Suppose we reduce $A$ to an upper triangular form $U$ as in Prop.~\ref{prop:U} using elementary row operations only. Observe that the only non-zero minor of $U$ of order $K$ is given by the product of its diagonals. Let us denote this non-zero minor by $U_K(z)$. But since this is the only non-zero minor of order $K$, and elementary matrices have constant determinants, it follows that we must have $U_K(z) = c\, D_K$ for some constant $c$. Thus, $U_K(z)$ is a constant. Therefore, $U$ is a constant diagonal matrix with a non-zero diagonal. Each constant on the diagonal can be made unity by an elementary row operation, yielding the canonical form $B$.
\end{proof}

In a final proposition, we show the equivalence stated in Remark~\ref{rem:equivalence}.
\begin{prop}\label{prop:equivalence}
Suppose $A$  is an $N\times K$  polynomial matrix with $K\leq N$.
The greatest common divisor of all the $K$-minors of $A$  is a non-zero constant if and only if the Smith canonical form of $A$ has a  non-zero constant diagonal.
\begin{proof}
As in the discussion above, let $D_r(z)$ denote the greatest common divisor of all minors of $A$ of order $r$, where the leading coefficient is unity, $i_r = D_r / D_{r-1}$ for $r>1$, with $i_1 = D_1$. Recall that $i_r$'s  form the diagonal in the Smith canonical form of $A(z)$.

Suppose now that $D_K(z) = 1$. This implies that $i_r = 1$ for all $r \leq K$. Therefore the diagonal of the canonical form of $A$  is the identity matrix.

For the converse, suppose the canonical form of $A$ has a constant diagonal. This means that $i_r(z)$'s are constant for $r \leq K$. But this implies that $D_r(z) = 1$  for $r\leq K$.
\end{proof}
\end{prop}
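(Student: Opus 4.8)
The plan is to argue entirely through the invariant polynomials, mirroring the notation set up just before Prop.~\ref{prop:smith}. Let $D_r(z)$ denote the monic greatest common divisor of all order-$r$ minors of $A$, and set $i_1 = D_1$ and $i_r = D_r / D_{r-1}$ for $r > 1$. I will use two facts recalled in that discussion: that $D_{r-1}$ divides $D_r$ (so each $i_r$ is a polynomial), and that the $i_r$ are exactly the diagonal entries $a_r$ of the Smith canonical form of $A$. The only relation doing real work is the telescoping identity $D_K = i_1\, i_2 \cdots i_K$ together with the divisibility chain $D_1 \mid D_2 \mid \cdots \mid D_K$.

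For the forward direction, suppose the gcd of the $K$-minors of $A$ — i.e.\ $D_K$ after monic normalization — is a non-zero constant, so that $D_K = 1$. Since $D_1 \mid D_2 \mid \cdots \mid D_K = 1$ and each $D_r$ is monic, we must have $D_r = 1$ for every $r \le K$, hence $i_r = D_r / D_{r-1} = 1$ for every $r \le K$. Thus the Smith diagonal of $A$ is $(1,\ldots,1)$, which is a non-zero constant diagonal.

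Conversely, suppose the Smith canonical form of $A$ has a non-zero constant diagonal. Then each $a_r = i_r$ with $r \le K$ is a non-zero constant; since the Smith normalization forces $a_r(0) = 1$, in fact $i_r = 1$ for all $r \le K$. Therefore $D_K = i_1\, i_2 \cdots i_K = 1$ is a non-zero constant, as required.

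I do not expect a genuine obstacle here: the argument is just the monotone divisibility chain $D_1 \mid \cdots \mid D_K$ plus the identification of the $i_r$ with the Smith diagonal, both quoted from the exposition. The one point needing a word of care is the degenerate case $\mathrm{rank}\,A < K$: then all order-$K$ minors vanish, their gcd is $0$ rather than a non-zero constant, and correspondingly the Smith diagonal has a zero entry, so both sides of the equivalence fail at once. Hence it suffices to treat the full-rank case, where the monic normalizations above are legitimate.
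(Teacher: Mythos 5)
Your argument is correct and is essentially the paper's own proof: both directions go through the invariant factors $i_r = D_r/D_{r-1}$, the divisibility chain $D_1 \mid \cdots \mid D_K$, and the identification of the $i_r$ with the Smith diagonal. Your extra remarks on the telescoping identity and the rank-deficient case only make explicit what the paper leaves implicit.
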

\section{Proof of Prop.~\ref{prop:MCLP2}}\label{app:MCLP2}
Without loss of generality, suppose $r = 1$ in the following.

By Remark~\ref{rem:equivalence}, which is proved in Prop.~\ref{prop:equivalence}, we first note that the gcd of the $K$-minors of  $\bar{\bo{H}}_1$ is a non-zero constant if and only if the canonical form of $\bar{\bo{H}}_1$ has a non-zero constant diagonal.
Thus, by Prop.~\ref{prop:smith}, we can find a $K \times (N-1)$ polynomial matrix $\bo{V}$ such that
\begin{equation}
\bo{V}\,\bar{\bo{H}}_1 = I.
\end{equation}
For $k = 1,\ldots, K$, let us define the polynomials
\begin{equation}
\mathcal{P}_k = \frac{\mathcal{H}_{1,k}(z) - \mathcal{H}_{1,k}(\infty)}{z^{-1}}
\end{equation}
and set
\begin{equation}
\begin{bmatrix}
\mathcal{U}_2 & \cdots & \mathcal{U}_N
\end{bmatrix} = \begin{bmatrix}
\mathcal{P}_1 & \cdots & \mathcal{P}_K
\end{bmatrix}\,\bo{V}.
\end{equation}
Also, set $\mathcal{U}_1(z) = 0$.
Then for $\mathcal{G}$ defined as in \eqref{eqn:MCLPfilters}, we have 
\begin{multline}
\begin{split}
\begin{bmatrix}
\mathcal{G}_1 & \cdots & \mathcal{G}_N
\end{bmatrix} \,
\bo{H} = \begin{bmatrix}
\mathcal{H}_{1,1}(z) & \cdots & \mathcal{H}_{1,K}(z)
\end{bmatrix} \\  \quad - z^{-1}\,\begin{bmatrix}
\mathcal{P}_1(z) & \cdots & \mathcal{P}_K(z)
\end{bmatrix} \\
= \begin{bmatrix}
\mathcal{H}_{1,1}(\infty) & \cdots & \mathcal{H}_{1,K}(\infty)
\end{bmatrix}
\end{split}
\end{multline}

For the converse, observe that, for causal $\mathcal{U}_k$ and $\mathcal{H}_{i,k}$,  the polynomial
\begin{equation}
\mathcal{C}_k(z) = \sum_{i=1}^N z^{-d}\,\mathcal{U}_i(z)\,\mathcal{H}_{i,k}(z)
\end{equation}
does not have any constant terms, regardless of $k$, i.e., $C_k(\infty) = 0$ for all $k$. Observe also that $\mathcal{G}_1(\infty) = 1$.  Thus if \eqref{eqn:propMCLP2} holds, then,
\begin{align}
\left. \sum_{i=1}^N \mathcal{G}_i(z) \,\mathcal{H}_{i,k}(z) \right|_{z = \infty} &= \mathcal{G}_1(\infty)\,\mathcal{H}_{1,k}(\infty) - C_k(\infty)\\
& = \mathcal{H}_{1,k}(\infty).
\end{align}

\section{Proof of Prop.~\ref{prop:main}}\label{app:main}

We remark that \eqref{eqn:propMCLP3} follows by expressing \eqref{eqn:propMCLP2} in the time-domain.

Let us consider \eqref{eqn:propMCLPDOA}. If \eqref{eqn:propMCLP2} holds, and $\mathcal{G}_i$'s satisfy \eqref{eqn:MCLPfilters}, then by \eqref{eqn:propcr}, we have
\begin{equation}
\sum_{i=1}^N \, \bigl( g_i \ast h_{i,k} \bigr)(s) = h_{r,k}(0)\,\delta(s).
\end{equation}
By Assumption~\ref{assm:main} and \eqref{eqn:prophi}, it follows that 
\begin{equation}
\sum_{i=1}^N \, \bigl( g_i \ast h_{i,k} \bigr)(s) = c'\,\exp(-j\,\omega\,\tau_{r,k})\,\delta(s),
\end{equation}
where $c'$  is independent of $r$  and $k$. Using \eqref{eqn:observation}, and the properties of convolution, we finally obtain
\begin{align}
\begin{split}
\sum_{i=1}^N\,g_i (s) \ast y_i(s) &= \sum_{i=1}^N\,g_i (s) \ast \Biggl( \sum_{k=1}^K\,x_k(s) \ast h_{i,k}(s) \Biggr)\\
& = \sum_{k=1}^K\,x_k(s) \ast \sum_{i=1}^N\,g_i (s) \ast h_{i,k}(s)\\
&= \sum_{k=1}^K\,x_k(s) \ast \Bigl( c'\,\exp(-j\,\omega\,\tau_{r,k})\,\delta(s) \Bigr)\\
&= c'\,\sum_{k=1}^K\,x_k(s) \,\exp(-j\,\omega\,\tau_{r,k}).
\end{split}
\end{align}

\section{Motivation for the Post-Filter in Sec.~\ref{sec:PF}}\label{app:PF}
In order to justify the variance expression in \eqref{eqn:sigma}, we will make use of Prop.~\ref{prop:PF} below. For simplicity, this proposition uses a seemingly independent notation. We will identify the elements in the proposition with our separation scenario right after the proposition. Recall that a complex valued Gaussian random variable $Z$ is said to be circular \cite{pic94p473} and have variance $2\,\sigma^2$  if its real and imaginary parts are independent and Gaussian random variables with variance $\sigma^2$. \begin{prop}\label{prop:PF}
Suppose $\gamma$ is an unknown constant and we observe $Y_n = c_n \gamma + u_n\,Z_n$, for $n = 1,\ldots, N$, where $|c_n|=1$, $u_n \in \mathbb{C}$ and $Z_n$'s are circularly symmetric, independent, unit variance, zero-mean Gaussian random variables. 

The uniformly minimum variance unbiased (UMVU) estimator for $\gamma$ is of the form 
\begin{equation}\label{eqn:UMVU}
\hat{\gamma} = \sum_{n=1}^N w_n \,Y_n
\end{equation}
where
\begin{equation}\label{eqn:wi}
w_n = c_n^*\,\beta\,|u_n|^{-2},
\end{equation}
with
\begin{equation}\label{eqn:beta}
\beta = \left( \sum_{n=1}^N |u_n|^{-2} \right)^{-1}.
\end{equation}
We also have that,
\begin{equation}
\frac{1}{N-1}\sum_{n=1}^N |w_n|\,\Bigl|  Y_n - c_n \hat{\gamma} \Bigr|^2
\end{equation}
is an unbiased estimator for the variance of $\hat{\gamma}$.
\begin{proof}[Proof (Sketch)]
It can be shown, by taking into account the Gaussian pdf of $Y_n$, that a complete sufficient statistic for $\gamma$ is
\begin{equation}
U = \sum_{n=1}^N c_n^*\,|u_n|^{-2}\,Y_n.
\end{equation}
Observing that $\mathbb{E}(U) = \beta^{-1} \, \gamma$, it follows by the Rao-Blackwell theorem that the UMVUE is as given in \eqref{eqn:UMVU} for $w_n$ as in \eqref{eqn:wi} and $\beta$ as in \eqref{eqn:beta}. It can also be checked that the variance of $\hat{\gamma}$  is,
\begin{equation}
\var(\hat{\gamma}) = \sum_{n=1}^N\,|w_n|^2\,|u_n|^2 =  \beta^2\,\sum_{n=1}^N\,|u_n|^{-2} = \beta. 
\end{equation}

Let us now set $\nu = \sum_{n=1}^N |w_n|\, \Bigl|\,  \bigl( Y_n - c_n \hat{\gamma} \bigr) \Bigr|^2$.
Then,
\begin{align}
\begin{split}
&\mathbb{E}(\nu) =   \sum_{n=1}^N  |w_n|\, \mathbb{E} \Bigl(  \bigl| Y_n - c_n \hat{\gamma} \bigr|^2 \Bigr)\\
& =  \sum_{n=1}^N  |w_n|\,\left( \bigl(1 - |w_n|\bigr)^2\,|u_n|^2 + \sum_{k \neq n} |w_k|^2\,|u_k|^2\right) \\
&= \sum_{n=1}^N |w_n|\,\left( \bigl(1 - |w_n|\bigr)^2\,|u_n|^2 + \beta\sum_{k \neq n} |w_k|\right) \\
&= \sum_{n=1}^N  |w_n|\,\left( \bigl(1 - |w_n|\bigr)^2\,\frac{\beta}{|w_n|} + \beta(1 - |w_n|) \right) \\
&= \sum_{n=1}^N  \beta (1-|w_n|) \\
&= \beta  (N-1).
\end{split}
\end{align}
\end{proof}
\end{prop}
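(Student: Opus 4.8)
The plan is to prove the two claims separately: the optimality of $\hat\gamma$ by the Lehmann--Scheff\'{e} route (exhibit a complete sufficient statistic, then an unbiased function of it), and the unbiasedness of the proposed variance estimate by a direct second-moment computation. Throughout I would lean on the standing identities that follow from $|c_n|=1$, namely $c_n w_n = |w_n| = \beta\,|u_n|^{-2}$, and hence $\sum_n w_n c_n = \beta\sum_n |u_n|^{-2} = 1$, $\sum_n |w_n| = 1$, and $|w_n|\,|u_n|^2 = \beta$.

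For the first claim I would start from the joint density of $(Y_1,\ldots,Y_N)$. Since $u_n Z_n$ is circularly symmetric complex Gaussian with second moment $|u_n|^2$, the density is proportional to $\exp\bigl(-\sum_n |Y_n - c_n\gamma|^2/|u_n|^2\bigr)$; expanding $|Y_n - c_n\gamma|^2 = |Y_n|^2 - \overline{\gamma}\,c_n^* Y_n - \gamma\, c_n\overline{Y_n} + |\gamma|^2$ and invoking the factorization theorem shows that $U = \sum_n c_n^*\,|u_n|^{-2}\,Y_n$ is sufficient for $\gamma$. Viewing $\gamma$ as a parameter ranging over $\mathbb{R}^2$, the induced family is a full-rank Gaussian location family, hence complete -- this is the one step I expect to require care, since completeness is exactly what upgrades ``unbiased function of $U$'' to ``UMVU estimator''. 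From $\mathbb{E}(U) = \gamma\sum_n |u_n|^{-2} = \beta^{-1}\gamma$ one gets that $\hat\gamma = \beta\,U$ is unbiased, and being a function of a complete sufficient statistic it is the UMVU estimator; rewriting $\beta U = \sum_n w_n Y_n$ recovers \eqref{eqn:UMVU}--\eqref{eqn:beta}.

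For the second claim, substituting $Y_m = c_m\gamma + u_m Z_m$ into $\hat\gamma = \sum_m w_m Y_m$ makes the $\gamma$-terms cancel (because $\sum_m w_m c_m = 1$), leaving $\hat\gamma - \gamma = \sum_m w_m u_m Z_m$ and therefore $\var(\hat\gamma) = \sum_n |w_n|^2|u_n|^2 = \beta^2\sum_n|u_n|^{-2} = \beta$. The same substitution gives $Y_n - c_n\hat\gamma = (1 - |w_n|)\,u_n Z_n - c_n\sum_{m\neq n} w_m u_m Z_m$, and since the $Z_m$ are independent, zero-mean and circular, all cross terms drop and $\mathbb{E}\,|Y_n - c_n\hat\gamma|^2 = (1-|w_n|)^2|u_n|^2 + \sum_{m\neq n} |w_m|^2|u_m|^2$. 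Writing $p_n = |w_n|$ (so $\sum_n p_n = 1$) and using $|w_k|^2|u_k|^2 = \beta p_k$ together with $|u_n|^2 = \beta/p_n$, the $n$-th term of $\nu = \sum_n |w_n|\,|Y_n - c_n\hat\gamma|^2$ has expectation $\beta\bigl[(1-p_n)^2 + p_n(1-p_n)\bigr] = \beta(1-p_n)$, so $\mathbb{E}(\nu) = \beta\sum_n (1-p_n) = \beta(N-1)$, i.e. $\nu/(N-1)$ is unbiased for $\beta = \var(\hat\gamma)$. Apart from the completeness argument, everything is routine bookkeeping driven by the three identities recorded in the first paragraph.
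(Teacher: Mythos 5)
Your proposal is correct and follows essentially the same route as the paper's sketch: exhibit the complete sufficient statistic $U=\sum_n c_n^*|u_n|^{-2}Y_n$, conclude UMVU-ness via Lehmann--Scheff\'{e}, and then verify unbiasedness of the variance estimate by the same second-moment computation of $\mathbb{E}\bigl|Y_n-c_n\hat{\gamma}\bigr|^2$ using the identities $|w_k|^2|u_k|^2=\beta|w_k|$ and $\sum_k|w_k|=1$. You merely fill in the steps the paper leaves implicit (the factorization argument for sufficiency and the completeness of the Gaussian location family), so no further comparison is needed.
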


In order to relate this proposition with the discussion in Section~\ref{sec:PF}, we make the following identifications : ${Y_n \to \hat{Y}_n(s,\omega)}$; ${c_n \to a_{\theta,n}}$; ${w_n \to W_{i,n}}$; ${\hat{\gamma} \to \hat{S}_i(s,\omega)}$. 

For the $i\thh$ source, if we regard all of the remaining sources as noise terms with a Gaussian distribution, then the optimal estimate of the $i\thh$ source is obtained by a linear combination of the observations. We assume that GSS approximately finds these optimal weights. In that case, the expression in \eqref{eqn:sigma} gives an estimate of the variance of these unwanted components (or, `noise') that remain after GSS.

\end{document}